\newtheorem*{rep@theorem}{\rep@title}
\newcommand{\newreptheorem}[2]{%
\newenvironment{rep#1}[1]{%
 \def\rep@title{#2 \ref{##1}}%
 \begin{rep@theorem}}%
 {\end{rep@theorem}}}
\newtheorem{theorem}{Theorem}[section]
\newtheorem{corollary}[theorem]{Corollary}
\newtheorem{prop}[theorem]{Proposition}
\newtheorem{Lemma}[theorem]{Lemma}
\newtheorem{claim}{Claim}
\theoremstyle{definition}
\newtheorem{defn}[theorem]{Definition}
\newcommand{\jhc}[1]{{\color{blue}  #1}}
\begin{document}

\preprint{APS/123-QED}

\title{Faking Gauge Coupling Unification in String Theory}

\author{James Halverson$^{1,2}$}
 \author{Benjamin Sung$^{2}$}%
 \affiliation{%
 \vspace{.2cm}
 $^1$The NSF AI Institute for Artificial Intelligence and Fundamental Interactions\vspace{.15cm}\\
 $^2$Department of Physics, \\ Northeastern University, \\Boston, MA 02115, USA
 }%

\begin{abstract}
    Gauge coupling unification misleads infrared observers if new gauge bosons do not simultaneously come into the spectrum. Though easy to engineer in gauge theory, the situation in string theory is nuanced, due to moduli dependence.
    We study the possibility of faking gauge coupling unification in the context of $4$d F-theory compactifications. Specifically, we formulate a sufficient condition that we call a \emph{strong calibration}, under which seven-brane gauge couplings on homologically distinct divisors become equal at codimension one in K\"{a}hler moduli space. We prove that a strong calibration is preserved under appropriate topological transitions and that a pair of non-intersecting divisors each admitting a contraction can always be strongly calibrated. Within the Tree ensemble~\cite{Halverson:2017ffz}, we find that $\approx 77.12\%$ of pairs of intersecting toric divisors can be strongly calibrated and $\approx 3.22\%$ can never be calibrated. Physically, this means that gauge coupling unification can be faked in most cases that we study, but in others
	it cannot, which is surprising from a gauge theoretic perspective.
\end{abstract}

\maketitle


\section{Introduction}

Grand unification is an enticing possibility for physics beyond the Standard Model \cite{georgiglashow,patisalam}. There are numerous lines of suggestive evidence for the hypothesis, such as gauge coupling unification and the representation theoretic structure of Standard Model (SM) fermions, which, together with an appeal to unity and beauty in particle physics, has driven an enormous amount of research \cite{langackerGUT}.  However, the simplest models introduce phenomenological problems such as rapid proton decay and GUT monopoles, neither of which have been observed, though they may be addressed with model building and inflation.


Alternatively, other hypotheses might account for the phenomena that are regularly cited as evidence for grand unification. For instance, if instead chiral gauge interactions are the governing principle, then the representation theoretic structure of the SM might instead be an accident of anomaly cancellation \cite{PhysRevD.39.693,PhysRevD.41.715,PhysRevD.41.717}, which often correlates with embedding into simple representations of larger groups.
This is a group theoretic explanation that does not require unification into a non-abelian gauge theory associated to the larger group. 
Similarly, the weak hypercharge is often cited as evidence for grand unification; but it is also the \emph{only} anomaly-free chiral $U(1)$ that can charge the $SU(3)\times SU(2)$ fermion content of the SM.

Which brings us to the subject of this article: gauge coupling unification. Should we find it compelling? 
On one hand, the Standard Model nearly exhibits
equality of gauge couplings at high scale, a necessary condition for grand unification,
and it is an impressive fact that weak-scale supersymmetry, a theory that is compelling for independent reasons, significantly improves the situation \cite{Langacker:1993ai,Ellis:1990wk,Langacker:1991an,Giunti:1991ta}. On the other hand, a skeptic might see the log-scale plot of $\alpha_i^{-1}$ and point out that, since two lines in $\mathbb{R}^2$ generically intersect in a point, having three requires tuning only one real parameter; i.e., high-scale equality of SM gauge couplings is a real codimension-one tuning. The grand unification proponent might counter that, yes, two lines generically intersect in a point, but not necessarily below the Planck scale! To which the skeptic gladly points out that this is an inequality, and is not even a codimension-one effect. 

\vspace{.3cm}
If the near gauge coupling unification observed in the Standard Model is an accident --- i.e., they do have similar values at high-scale, but there is no unification into a larger non-abelian theory --- one might say that gauge coupling unification is a fake that has misled the infrared observer toward grand unification. Achieving such a situation in gauge theory clearly requires a tuning, but perhaps it is mild enough to be acceptable, especially given the existence of other apparent tunings that are of much greator concern, such as the Higgs mass  and the cosmological constant; see, e.g., \cite{Weinberg:1975gm} and   \cite{Weinberg:1988cp}, respectively.

The situation is much more nuanced in string theory, however, since gauge couplings arise from scalar field expectation values. Generally, this moduli dependence can give rise to correlations between couplings and numbers of degrees of freedom that can affect low energy physics. For instance, in a different physical application, the $O(10^{15})$ compactifications with the exact chiral spectrum of the SM \cite{Cvetic:2019gnh} exhibit moduli dependence that yields a delicate interplay between controlling the theory, obtaining the correct value of the gauge couplings, dark gauge sectors, and ultralight axions \cite{Cvetic:2020fkd}. Notably, this ensemble automatically realizes gauge coupling unification without needing to tune moduli, due to homological properties satisfied by the SM sector.

In this paper we study the possibility of faking gauge coupling unification in string theory. Specifically, we study whether the moduli may be tuned such that two gauge couplings may be made equal. Relative to the above discussion, one should think of this as the string theoretic analog of picking a scale and then tuning two lines on the $\alpha_i^{-1}$ plot until they intersect. In gauge theory this tuning is clearly codimension one, and a central question for us is whether it is also codimension one in string theory. Our main result is that gauge couplings very often ($\approx 77.12\%$ in the Tree ensemble) become equal on a codimension one locus in moduli space, though there are also a small percentage ($\approx 3.22\%$ in the Tree ensemble) of cases in which it is not possible \emph{anywhere} in moduli space. The latter is surprising on its own, and is worthy of further study. See \cite{Gato-Rivera:2014afa} for other aspects of faking grand unification in string theory, and \cite{Dienes:1995bx,Dijkstra:2004cc} for early statistical studies of gauge coupling unification.

Our calculations will be carried out in the context of F-theory compactifications on an elliptic fibration $X\to B$ over a K\"ahler threefold $B$, the extra dimensions of space. In this context, gauge sectors arise from seven-branes wrapping divisors (four-cycles). Two seven-brane gauge sectors generally arise from two divisors, $D_i$ and $D_j$, and their gauge couplings are equal when $\text{vol}(D_i)=\text{vol}(D_j)$. These volumes are a function of K\" ahler moduli, and for homologically distinct divisors it is generally the case that the volumes are equal only on a sublocus in moduli space. We will quantify this according to a few conditions, especially a so-called \emph{strong calibration} that we define, which guarantees the existence of an equal volume locus at codimension one in K\" ahler moduli space. 
Furthermore, strong calibration is preserved under appropriate topological transitions; it is a property that may propagate through large networks of string geometries, akin to the ED3-instanton analysis of \cite{Halverson:2019vmd}.

This paper is organized as follows. In Section \ref{section:4ftheory} we introduce aspects of four-dimensional F-theory compactifications. In Section  \ref{sec:conditions} we define the study of the equal volume locus as a quadratic program, define strong calibration, and arrive at a number of general results related to the existence of equal volume loci. In Section \ref{section:calibration} we apply the general results to arrive at more specific conditions in the case that $B$ is a toric threefold. We also review the Tree ensemble, a colletion of $10^{755}$ topologically distinct toric threefolds $B$ that are related in a connected moduli space; this motivates the restriction to the toric case. Finally, in Section \ref{section:statistics} we apply those specific conditions to the Tree ensemble, demonstrating that, more often than not, there is an equal volume locus at codimension one that fakes gauge coupling unification, though sometimes the locus does not exist. We also give a concrete example exhibiting these ideas.

\section{4d F-theory}\label{section:4ftheory}

We will study the possibility of faking gauge coupling unification in the context of 4d F-theory compactifications.
The geometric nature of F-theory has led to the largest class of $4d$ $\mathcal{N}=1$ compactifications to date. In four-dimensional theories arising from compactification on an elliptically fibered Calabi-Yau fourfold $X \rightarrow B$, codimension $1,2,$ and $3$ intersections of irreducible components of the discriminant $\Delta \subset B$ lead to gauge algebras, matter, and Yukawa couplings respectively. 

We first specify consistent $4$d F-theory backgrounds. We recall the following:
\begin{defn}
An F-theory compactification geometry is specified by a smooth threefold base $B$ satisfying the Hayakawa-Wang criterion \cite{Hayakawa1995DEGENERATIONOC,Wang97onthe}: For any divisor $D \subset B$, there exist sections $f\in H^0(B,\omega_B^{\otimes-4}), g\in H^0(B,\omega_B^{\otimes-6})$  such that $mult_D(f,g) < (4,6)$. 
\end{defn}
\noindent This ensures that the associated Weierstrass elliptic Calabi-Yau fourfold has at worst canonical singularities~\cite{grassiminimal}, and hence is at finite distance from the bulk of the moduli space. The elliptic fourfold may be written in Weierstrass form
\[
y^2 = x^3 + fx + g, \qquad \Delta = 4f^3 + 27g^2 = 0
\]
$f \in \Gamma(-4K_B), g \in \Gamma(-6K_B)$.
Each reduced, irreducible component of $\Delta$ corresponds to a $7$-brane and the corresponding gauge algebra is specified by the order of vanishing of $\Delta$ along the corresponding component. In particular, such a $7$-brane is geometrically non-Higgsable if such a gauge algebra exists for generic sections $f$ and $g$; this is a gauge sector that cannot be removed by complex structure deformation.

In order to study F-theory geometries as broadly as possible, we must:
\begin{enumerate}[(a)]
\item
develop an efficient method to check the Hayakawa-Wang criterion for an arbitrary base,
\item
systematically understand the possible gauge algebras arising in the compactification.
\end{enumerate}
The history of this problem is rich and particularly well-understood in the case of base surfaces~\cite{Morrison_2012,Morrison_20121,Morrison_20122,taylor2017nontoric}, which yield compactifications to six dimensions. In this case, the questions a) and b) can be addressed precisely due to
\begin{enumerate}[(a)]
\item
finiteness results for elliptic threefolds~\cite{grassiminimal,1993alg.geom..5002G},
\item
Zariski decomposition for surfaces \cite{morrisontaylor2012}.
\end{enumerate}
In particular (b) allows one to determine the existence of gauge algebras associated with non-Higgsable clusters for general $6$d F-theory compactifications.

For 4d compactifications, there are fewer systematic results, but certain aspects of a) and b) may be addressed.
\begin{enumerate}[(a)]
	\item
	Finiteness of elliptic fourfolds (up to birational transformation) was established in recent work \cite{dicerbo2019birational}, using  \cite{10.2307/40587279}. 
	\item
	Though the gauge algebra may be checked in many concrete examples, the lack of a higher-dimensional analog of the Zariski decomposition prevents a systematic gauge algebra analysis for four-dimensional compactifications.
\end{enumerate}
However, since we will be working with toric bases, the Hayakawa-Wang criterion and the gauge algebra may be studied directly.

\section{Conditions for Faking Gauge Coupling Unification \label{sec:conditions}}

We turn to general conditions that are sufficient for faking GCU. By this we mean that there exists a sublocus in K\" ahler moduli space on which two divisors have equal volume, in which case we say they are \emph{calibrated} to the same volume\footnote{Though this English word is natural to describe the phenomenon, its use in this case is distinct from the related notion of calibrating the volume of a single calibrated submanifold \cite{10.1007/BF02392726}.}. Lemma \ref{lem:existence} provides a simple condition that is sufficient for the existence of such a sublocus, at codimension one; we call this a strong calibration. Lemmas \ref{lem:disjcalib} and \ref{lem:persistence} utilize strong calibration in particular cases; the former ensures that disjoint exceptional divisors may be calibrated, while the latter ensures the persistence of the calibratibility of two divisors under smooth blowups. We will demonstrate that both are useful for studying calibration in large ensembles of string geometries related by topological transitions. Finally, Lemma \ref{lem:calibexc} simplifies strong calibration in a case of interest to the ensemble.

\bigskip 

We begin by stating the equal volume condition, choosing to consider $\text{Nef}(X)$ in place of the K\"ahler cone so that results may be stated as intersections. 
Let $J_1,...,J_n$ denote the generators of $\text{Nef}(X)$ and fix $D_1$ and $D_2$, two effective divisors on $X$. We are interested in solutions to the system
\begin{equation}\label{eq:div3}
(a_1 J_1 + ... + a_n J_n)^2 \cdot (D_1 - D_2) = 0, \, a_i > 0\, \quad  \forall \, i
\end{equation}
which are loci in moduli space where $D_1$ and $D_2$ have equal volume.
Equivalently, define the $n\times n$ matrix $Q^{D_1 - D_2}$ with components $Q^{D_1 - D_2}_{ij} = J_{i} \cdot J_{j} \cdot (D_{1} - D_{2})$ and equation~\ref{eq:div3} is equivalent to considering existence of positive solutions to quadratic forms:
\begin{equation}\label{eq:div4}
a^T Q^{D_1-D_2} a = 0,\, a \in \mathbb{R}^n, \,a_i >0 \, \qquad\forall\, i
\end{equation}
and more generally for a collection of divisors $\{D_1,...,D_k\}$ existence of positive solutions to the system
\begin{align}\label{eq:div5}
a^T Q^{D_1-D_2}a &= 0,\,\,\, \dots\,\,\,  ,\,a^T Q^{D_{k-1}-D_k}a = 0, \nonumber \\ &a\in \mathbb{R}^n, \, a_i >0 \, \qquad \forall \, i
\end{align}
This type of problem is known as a quadratic program, which in general may be computationally complex. They arise in many places in string and field theory \cite{Halverson:2018cio}.

In the absence of an exact characterization of divisors admitting such a solution, we will reduce the question of existence of a solution to equation~\eqref{eq:div4} to a simpler constraint that is sufficient for the existence of a solution. We will then study particular cases related to blowups that are computationally feasible to check. 

Given two effective divisors $D_1, D_2$ on $X$, consider the corresponding quadratic form $Q^{D_1 - D_2}(a) = a^T Q^{D_1-D_2}a$. 
We define the positive orthants
\[\mathbb{R}^n_+ = \{(a_1,\ldots, a_n)\, |\, a_i \geq 0\, \forall\, i\}
\]
In particular, the interior of the positive orthant is given by $Int(\mathbb{R}^n_+) =  \{(a_1,\ldots, a_n)\, |\, a_i > 0\, \forall\, i\}$
\begin{Lemma}\label{lem:existence} 
Let $Q \colon \mathbb{R}^n \rightarrow \mathbb{R}$ be a quadratic form. Assume that there exists $a,b \in \mathbb{R}^n_+$ such that $Q(a) > 0$ and $Q(b) < 0$. Then there exists an open neighborhood $U \subset Int(\mathbb{R}^n_+)$ such that $U \cap Q^{-1}(0)$ is a submanifold of codimension $1$.
\end{Lemma}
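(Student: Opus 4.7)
The plan is to reduce the existence of a regular zero of $Q$ in the interior of the positive orthant to a one-variable problem on a line segment. First, I would use continuity to move the hypotheses into the interior: since $Q$ is continuous and $Q(a) > 0$, the set $\{Q > 0\}$ is an open neighborhood of $a$; as $a \in \mathbb{R}^n_+$ and $\mathbb{R}^n_+$ is the closure of its interior, this neighborhood meets $\mathrm{Int}(\mathbb{R}^n_+)$, so I may choose $a' \in \mathrm{Int}(\mathbb{R}^n_+)$ with $Q(a') > 0$. The same argument yields $b' \in \mathrm{Int}(\mathbb{R}^n_+)$ with $Q(b') < 0$. Since $\mathrm{Int}(\mathbb{R}^n_+)$ is convex, the segment $\gamma(t) = (1-t)a' + tb'$ for $t \in [0,1]$ lies entirely in the interior.

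Next I would analyze the one-variable polynomial $p(t) = Q(\gamma(t))$. Because $Q$ is a quadratic form, $p$ is a polynomial of degree at most two; by construction $p(0) > 0$ and $p(1) < 0$. The crucial observation — and the step that overcomes the only real obstacle, namely the possibility that every zero of $Q$ in $\mathrm{Int}(\mathbb{R}^n_+)$ is a critical point — is that a real polynomial of degree at most two that changes sign cannot have a double root: if it did, it would be of the form $c(t-t_0)^2$, which has constant sign. Hence $p$ admits a \emph{simple} root $t_0 \in (0,1)$, at which $p'(t_0) \neq 0$.

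Writing $c = \gamma(t_0) \in \mathrm{Int}(\mathbb{R}^n_+)$, the chain rule gives $p'(t_0) = dQ_c(b' - a')$, so $dQ_c \neq 0$. Thus $0$ is a regular value of $Q$ on an open neighborhood of $c$, and the implicit function theorem (or equivalently the regular value / submersion theorem) produces an open neighborhood $U \subset \mathrm{Int}(\mathbb{R}^n_+)$ of $c$ on which $U \cap Q^{-1}(0)$ is an embedded submanifold of codimension one. This is the desired $U$, completing the proof.
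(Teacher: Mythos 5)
Your proof is correct, and it takes a genuinely different route from the paper's. The paper first invokes the intermediate value theorem along an arbitrary path to produce a zero $p$ of $Q$ in $\mathrm{Int}(\mathbb{R}^n_+)$, then diagonalizes $Q$ and splits into cases depending on whether $p$ is a regular point; in the degenerate case it perturbs $p$ along a positive-eigenvalue and a negative-eigenvalue coordinate direction to manufacture nearby points of opposite sign and then reruns the IVT to land on a regular zero. Your argument instead pushes $a,b$ into the interior by continuity, takes the straight segment between them (using convexity of the interior), and observes that the restriction $p(t) = Q(\gamma(t))$ is a univariate polynomial of degree at most two which changes sign and therefore cannot have a double root --- so the zero in $(0,1)$ is automatically simple and the chain rule immediately gives $dQ_c \neq 0$. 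This avoids the diagonalization, the case split, and the explicit perturbation; it is shorter and arguably more robust (your opening step of moving $a,b$ into the interior also cleanly addresses a point the paper glosses over, namely that the IVT zero along a boundary path need not a priori land in the open orthant). The paper's approach, by contrast, makes the structure of the critical locus of a quadratic form explicit via the normal form, which is conceptually informative but not needed for the conclusion. Both are valid; yours is the more economical.
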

\begin{defn}[Strong Calibration]
Given two divisors $D_1,D_2 \subset X$, we say that the pair $(D_1,D_2)$ can be \emph{strongly calibrated} if it satisfies the assumptions of Lemma~\ref{lem:existence}, since the assumptions are sufficient but not necessary for calibration.
\end{defn}

\noindent In principle, two divisors $D_1,D_2$ can potentially be calibrated even if they do not admit a strong calibration; i.e. there may exist $J$ ample on $X$ such that $J^2 \cdot D_1 = J^2 \cdot D_2$, but not under the assumptions of Lemma \ref{lem:existence}.

A natural concern is whether the neighorbood $U$ of Lemma \ref{lem:existence} is in a regime of control of the theory. Let $p$ be the center of $U$, and suppose that it is in the K\" ahler cone, but not the stretched K\" ahler cone \cite{Demirtas:2018akl}. Then the gauge coupling may not simply be a divisor volume, due to $\alpha'$ corrections, potentially spoiling volume-related analyses of gauge coupling unification. However, $p$ may be scaled out by a multiplicative factor, yielding a new point $p'$ satisfying the same conditions that is arbitrarily deep inside the stretched K\" ahler cone. The existence of gauge coupling unification via calibration may therefore be studied reliably, though obtaining a specific \emph{value} of the unified coupling may not be possible in a regime of control. Such interplays between the stretched K\" ahler cone and correlations between gauge couplings and other physical observables were studied in \cite{Cvetic:2020fkd}.

\subsection{Existence and persistence under blowups}
In this section, we explore circumstances under which two divisors can be calibrated. We will primarily be interested in the following setting: Assume $\pi \colon \hat{X} \rightarrow X$ a blowup of smooth complex projective varieties with smooth center. We denote by $E$ the exceptional divisor. Given two divisors $D_1,D_2 \subset X$, we denote by $\widetilde{D}_1, \widetilde{D}_2$ their proper transforms.

We first demonstrate that a calibration always exists assuming that $D_1$ and $D_2$ are disjoint and each admit contractions. 
\begin{Lemma}\label{lem:disjcalib}
Assume in addition that $D_1$ and $D_2$ are disjoint and there exists  birational contractions $\pi_i \colon X \rightarrow X_i, i \in \{1,2\}$, such that $D_i$ is the exceptional divisor of $\pi_i$. Then $D_1$ and $D_2$ can be calibrated.
\end{Lemma}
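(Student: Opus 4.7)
The natural strategy is to invoke Lemma~\ref{lem:existence}: it suffices to exhibit two elements of $\mathbb{R}^n_+$ (i.e., two nef classes on $X$) at which the quadratic form $Q^{D_1-D_2}$ takes opposite signs. The pullbacks of ample classes from the two contraction targets $X_1$ and $X_2$ are the obvious candidates, since each such pullback is designed to kill intersections with one of the $D_i$ while retaining positive intersection with the other.

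Concretely, I would choose an ample divisor class $A_1$ on $X_1$ and set $H_1 := \pi_1^* A_1$, which lies in $\text{Nef}(X)$ since pullbacks of ample classes along morphisms are nef. Because $\pi_1$ contracts $D_1$, the image $\pi_1(D_1)$ has dimension at most one, so $H_1|_{D_1}$ is a pullback from a variety of dimension less than two; since $D_1$ is a surface, this forces $(H_1|_{D_1})^2 = 0$ and hence $H_1^2 \cdot D_1 = 0$. On the other hand, disjointness $D_1 \cap D_2 = \emptyset$ implies that $\pi_1$ is an isomorphism on a neighborhood of $D_2$, so $\pi_{1*} D_2$ is an effective divisor on $X_1$, and the projection formula gives
\[
H_1^2 \cdot D_2 \;=\; A_1^2 \cdot \pi_{1*} D_2 \;>\; 0
\]
by ampleness of $A_1$. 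Therefore $Q^{D_1-D_2}(H_1) = H_1^2\cdot(D_1-D_2) < 0$. A symmetric argument starting from an ample class $A_2$ on $X_2$ produces $H_2 := \pi_2^* A_2 \in \text{Nef}(X)$ at which $Q^{D_1-D_2}(H_2) > 0$.

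Writing $H_1$ and $H_2$ as nonnegative combinations of the nef generators $J_1,\ldots,J_n$ places them in $\mathbb{R}^n_+$, so Lemma~\ref{lem:existence} applies and furnishes a codimension-one equal volume locus in the interior of the positive orthant; in particular, $(D_1,D_2)$ admits a (strong) calibration. The only mildly delicate steps are the vanishing $(H_1|_{D_1})^2 = 0$, which I would justify by a short case split on $\dim \pi_1(D_1) \in \{0,1\}$, and the identification of $\pi_{1*}D_2$ as a well-defined effective divisor, which follows cleanly because $D_2$ is disjoint from the exceptional divisor $D_1$ of $\pi_1$. Everything else is formal bookkeeping.
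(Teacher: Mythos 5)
Your proof is correct and follows essentially the same strategy as the paper's: pull back ample classes from the two contraction targets $X_1$, $X_2$, use the projection formula to show that each pullback squares to zero against the contracted divisor and to a positive number against the other (using disjointness and the effective pushforward), and feed the resulting sign change into Lemma~\ref{lem:existence}. The only cosmetic difference is that you establish $(\pi_1^*A_1)^2\cdot D_1=0$ by restricting to $D_1$, whereas the paper invokes the projection formula (Proposition~\ref{prop:intfacts}) directly via $\pi_{1*}$; the substance is the same.
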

\begin{proof}
Without loss of generality, we consider the morphism $\pi_1 \colon X \rightarrow X_1$. As $\pi_1$ is an isomorphism away from $D_1$, $\pi_1(D_2) \subset X_1$ is an effective divisor and $\pi_1^*\pi_1(D_2) = D_2$. As $X_1$ is projective, there exists an ample class $J$ on $X_1$. In particular, we have that $(\pi_1^*J)^2 \cdot D_2 = (\pi_1^*J)^2 \cdot \pi_1^* \pi_1(D_2) = J^2 \cdot \pi_1( D_2) > 0$ and $(\pi_1^*J)^2 \cdot D_1 = \pi_1^*J^2\cdot D_1 = J^2 \cdot \pi_1(D_1) = 0$ by proposition~\ref{prop:intfacts}. Thus, we have that $(\pi_1^*J)^2 \cdot (D_2 - D_1) > 0$. The opposite inequality follows from the same argument applied to $\pi_2$. 
\end{proof}

Moreover, the existence of a calibration is always preserved under smooth blowups. 

\begin{Lemma}\label{lem:persistence}
Assume that $D_1, D_2$ are divisors on $X$ that can be strongly calibrated. Then their proper transforms $\widetilde{D}_1, \widetilde{D}_2$ can be calibrated on $\hat{X}$.
\end{Lemma}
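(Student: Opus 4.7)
The plan is to transport the strong-calibration witnesses on $X$ to $\hat{X}$ by pullback, and verify that the required sign conditions survive when $D_1 - D_2$ is replaced by $\widetilde{D}_1 - \widetilde{D}_2$. By hypothesis, strong calibration on $X$ provides classes $a = \sum_i a_i J_i$ and $b = \sum_i b_i J_i$ with $a_i, b_i \ge 0$ such that
\[
a \cdot a \cdot (D_1 - D_2) > 0, \qquad b \cdot b \cdot (D_1 - D_2) < 0.
\]
Since pullback preserves nefness, $\pi^* a$ and $\pi^* b$ lie in $\text{Nef}(\hat{X})$, hence admit expansions with nonnegative coefficients in whatever nef generators of $\hat{X}$ one chooses. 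They are thus valid inputs for the hypothesis of Lemma~\ref{lem:existence} on $\hat{X}$, provided the pertinent quadratic form takes opposite signs on them.

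The main step is the intersection computation. For a smooth blowup along a center $Z \subset X$, the proper-transform formula gives $\pi^* D_i = \widetilde{D}_i + m_i E$, where $m_i$ is the multiplicity of $D_i$ along $Z$ (and $m_i = 0$ if $Z \not\subset D_i$). Therefore
\[
\pi^* a \cdot \pi^* a \cdot (\widetilde{D}_1 - \widetilde{D}_2) \;=\; \pi^* a \cdot \pi^* a \cdot \pi^*(D_1 - D_2) \;-\; (m_1 - m_2)\, \pi^* a \cdot \pi^* a \cdot E.
\]
The projection formula identifies the first term with $a \cdot a \cdot (D_1 - D_2)$. The second term vanishes: because $Z$ has codimension at least two in $X$, the exceptional divisor satisfies $\pi_* E = 0$ as a divisor class on $X$, and so $\pi^* a \cdot \pi^* a \cdot E = a \cdot a \cdot \pi_* E = 0$ by projection formula again. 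The same computation applies to $b$. Hence
\[
\pi^* a \cdot \pi^* a \cdot (\widetilde{D}_1 - \widetilde{D}_2) > 0, \qquad \pi^* b \cdot \pi^* b \cdot (\widetilde{D}_1 - \widetilde{D}_2) < 0,
\]
with both witnesses lying in the nonnegative orthant of $\text{Nef}(\hat{X})$.

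Applying Lemma~\ref{lem:existence} with the quadratic form $Q^{\widetilde{D}_1 - \widetilde{D}_2}$ then produces a codimension-one equal-volume locus inside the interior of $\text{Nef}(\hat{X})$, so $\widetilde{D}_1$ and $\widetilde{D}_2$ are calibrated; in fact the argument yields the stronger conclusion that the pair is again strongly calibrated on $\hat{X}$. There is no real obstacle here: the only non-routine observation is the vanishing $\pi^* a \cdot \pi^* a \cdot E = 0$, which rests solely on $\pi_* E = 0$ for a smooth blowup along a center of codimension at least two. All other ingredients (pullback of nef is nef, the proper-transform formula, and the projection formula) are standard and independent of the ambient geometry, so the proof transfers without modification to the toric setting of interest later in the paper.
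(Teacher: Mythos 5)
Your proof is correct and follows essentially the same route as the paper's: pull back the witnessing nef classes, use the proper-transform formula $\widetilde{D}_i = \pi^*D_i - m_i E$, and kill the $E$-contribution via the projection formula (the paper's Proposition~\ref{prop:intfacts}(3)) so that $(\pi^*J)^2\cdot(\widetilde{D}_1-\widetilde{D}_2) = J^2\cdot(D_1-D_2)$. Your explicit justification that $\pi^*a\cdot\pi^*a\cdot E = 0$ via $\pi_*E=0$, and your observation that the conclusion is in fact a strong calibration on $\hat{X}$, merely spell out what the paper's terser proof leaves implicit.
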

\begin{proof}
The classes of the proper transforms satisfy $\widetilde{D_i} = \pi^* D_i - a_iE $ for some $a_i$. By assumption, there exists $J_1,J_2 \in Nef(X)$ such that $J_1^2 \cdot (D_1 - D_2) > 0 $ and $J_2^2 \cdot (D_1 - D_2) < 0$. In particular, we have that $(\pi^*J_i)^2 \cdot (\widetilde{D}_1 - \widetilde{D}_2) = (\pi^*J_i)^2 \cdot (\pi^* D_1 - \pi^* D_2) = J_i^2 \cdot (D_1 - D_2) $ by proposition~\ref{prop:intfacts}. Moreover, as the pullbacks of nef classes are nef by corollary~\ref{cor:intfacts}, the divisors $\widetilde D_1, \widetilde D_2$ can be calibrated.
\end{proof}
\subsection{Calibration of exceptional divisors}
Our primary interest will be in the calibration of exceptional divisors arising from blowups of smooth points and curves. In this section, we remark on a Lemma which simplifies the condition of Lemma~\ref{lem:existence} in our settings of interest. In addition, we record some facts about intersections with exceptional divisors in Appendix \ref{app:intersections} which will be used extensively later. 

Assume $\pi \colon Y \rightarrow X$ a blowup of a smooth complex projective variety with smooth center. Let $E$ be the exceptional divisor. Then in a common case we have a simplification of Lemma \ref{lem:existence}:
\begin{Lemma}\label{lem:calibexc}
Let $\widetilde{D}$ the proper transform of any divisor $D \subset X$. Then the pair $(\widetilde{D},E)$ admits a calibration if and only if there exists a nef class $J \in Nef(Y)$ such that $J^2\cdot (E - \widetilde{D}) > 0$.
\end{Lemma}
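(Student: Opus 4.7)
The plan is to extract both directions of the equivalence from one intersection calculation: for any ample class $H$ on $X$, the pullback $\pi^*H \in Nef(Y)$ satisfies
\[
(\pi^*H)^2 \cdot (\widetilde{D} - E) \;=\; H^2 \cdot D \;>\; 0.
\]
To see this, I would write $\widetilde{D} = \pi^*D - aE$ for some $a \geq 0$, apply the projection formula to get $(\pi^*H)^2 \cdot \pi^*D = H^2 \cdot D > 0$ (using that $H$ is ample and $D$ effective), and separately observe $(\pi^*H)^2 \cdot E = 0$, since $\pi^*H$ restricts to $E$ as a class pulled back from the lower-dimensional blowup center and so has vanishing self-intersection on $E$. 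These are exactly the intersection facts to be recorded in the appendix.

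With this in hand, the ``if'' direction becomes a direct application of Lemma~\ref{lem:existence} to the quadratic form $Q^{E-\widetilde{D}}$: the hypothesized nef class $J$ gives $Q^{E-\widetilde{D}}(J) > 0$ while $\pi^*H$ gives $Q^{E-\widetilde{D}}(\pi^*H) < 0$, so the codimension-one equal-volume locus exists and $(\widetilde{D}, E)$ is strongly calibrated, in particular calibrated. The ``only if'' direction is essentially definitional once the above inequality is known: a (strong) calibration explicitly requires a nef class on which $Q^{E-\widetilde{D}}$ is strictly positive, which is the stated condition. If one instead reads ``admits a calibration'' only as asking for some interior nef $J_0$ with $J_0^2 \cdot (\widetilde{D} - E) = 0$, I would produce a nearby nef $J_1$ with $J_1^2 \cdot (E - \widetilde{D}) > 0$ by a short continuity and perturbation argument combining the vanishing at $J_0$ with the strict positivity at $\pi^*H$.

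The only substantive step is the identity $(\pi^*H)^2 \cdot E = 0$, the one place where the geometry of the blowup (rather than the formal structure of the quadratic form) enters. For blowups of points it is immediate, since $\pi^*H|_E$ is trivial on $E \cong \mathbb{P}^{n-1}$; for blowups of smooth curves in a threefold, $E$ is a projective bundle over the curve and the restriction squared vanishes for dimension reasons on the curve. Once those structural facts about exceptional divisors of smooth blowups are in place, assembling the lemma reduces to citing them.
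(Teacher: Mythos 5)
Your proof takes essentially the same route as the paper's: pull back an ample class from $X$, note $(\pi^*A)^2 \cdot E = 0$ and conclude $(\pi^*A)^2 \cdot (E - \widetilde{D}) = -A^2 \cdot D < 0$, so that the negative inequality required by Lemma~\ref{lem:existence} holds automatically and the hypothesized nef $J$ supplies the positive one. Since the paper is reading ``admits a calibration'' as ``can be strongly calibrated'' (this is precisely what the remark following the lemma makes explicit), your ``only if'' direction is immediate as you observe, and the argument is complete.

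One caution on your final hedge: you wrote ``strict positivity at $\pi^*H$,'' but $Q^{E-\widetilde{D}}(\pi^*H) = -H^2\cdot D$ is strictly \emph{negative}. More importantly, the proposed ``continuity and perturbation'' step is not rigorous as stated: a quadratic form can vanish at an interior ample point $J_0$, be negative at $\pi^*H$, and still be everywhere nonpositive on the cone (e.g.\ $Q$ a negative semidefinite form with $J_0$ in its kernel), so vanishing at one interior point does not by itself force a sign change. Under the strong-calibration reading this direction is definitional and no such argument is needed, so the hedge is best dropped.
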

\begin{proof}
Let $A$ be an ample class in $X$ and $\pi^* A$ its pullback. Then $(\pi^*(A))^2 \cdot (E-\tilde D) = - A^2 \cdot D < 0$ since $(\pi^*(A))^2 \cdot E = 0$.
\end{proof}
\noindent In other words, the opposite inequality in Lemma \ref{lem:existence} is automatic under the assumptions.

\section{Existence of calibration}\label{section:calibration}
In this section, we review the construction of a large class of $4d$ F-theory bases introduced in~\cite{Halverson:2017ffz}. Within the framework of this construction, we give conditions under which many exceptional divisors can indeed be strongly calibrated. Although we work in the framework of toric geometry for ease of illustration, all our arguments generalize to arbitrary smooth complex projective varieties exhibiting a similar pattern of blowups.
\subsection{Review of the tree ensemble}
 By restricting to blowups of weak Fano toric threefolds, we may systematically check the Hayakawa-Wang criterion and compute a lower bound on the geometric gauge algebra appearing in such a compactification. We thus have the following claims:
\begin{claim}[\cite{Halverson:2017ffz}]
There exists an algorithmic construction yielding a lower bound of $\frac{4}{3} \times 2.96 \times 10^{755}$ F-theory geometries with toric threefold basess, connected by topological transitions in a connected moduli space.
\end{claim}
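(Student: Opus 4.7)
The plan is to construct the ensemble explicitly by defining a branching process on toric threefold bases and then to bound below the number of reachable nodes. First, I would identify a finite set of seed geometries consisting of smooth weak Fano toric threefolds that trivially satisfy the Hayakawa-Wang criterion; these can be drawn from the classification of reflexive $3$-polytopes together with standard toric data.

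Next, I would specify the allowed topological transitions. In the toric setting, a blowup of a smooth torus-invariant center (point or curve) corresponds to a star subdivision of the fan, and verifying the Hayakawa-Wang criterion reduces to a combinatorial check of the orders of vanishing of $f$ and $g$ along every toric divisor of the new fan, read off directly from the lattice points of the polytopes associated to $H^0(-4K_B)$ and $H^0(-6K_B)$. Both are decidable from the fan alone, yielding an effective algorithm at every node.

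Third, I would run the branching from each seed, recording the resulting toric threefolds as vertices of a directed graph whose edges are star subdivisions and their inverses. Because each edge is a topological transition, all reached geometries lie in a single connected component of the moduli space, which establishes the connectivity half of the claim. What remains is the numerical bound $\tfrac{4}{3}\times 2.96 \times 10^{755}$, which I would obtain by a careful accounting of branching factors.

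The main obstacle is that a naive product of local branching factors along root-to-leaf paths overcounts, because distinct paths may terminate at the same fan. I would handle this by restricting the search to a subtree in which each transition strictly increases a monotone fan invariant, such as the number of rays, so that distinct paths automatically end at non-isomorphic toric varieties, and then aggregate across seeds and depths. Even granted this refinement, the precise constant $2.96\times 10^{755}$ demands tight control of the branching multiplicity at each depth rather than a worst-case estimate; I would therefore partition branching choices into local moves whose multiplicities are computable directly from the star configuration of cones adjacent to each new ray, and multiply depth-by-depth. The factor $\tfrac{4}{3}$ is expected to arise from a symmetry of a particular local building block fixed near the root of the search tree.
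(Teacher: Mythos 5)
The paper does not actually supply a proof of this Claim; it quotes the result with the citation~\cite{Halverson:2017ffz}, and the reader is expected to consult that reference. So your task reduces to reconstructing the argument of that reference, and on that score your outline gets the broad strokes right: seeds are smooth weak Fano toric threefolds arising from FRSTs of the $4319$ reflexive $3$-polytopes, transitions are toric blowups (star subdivisions at torus-invariant points and curves), consistency is checked via the Hayakawa-Wang criterion which becomes a combinatorial bound on orders of vanishing, and connectedness follows because every transition is a blowup or blowdown. Up to that point your plan mirrors the construction.

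Where your proposal diverges from the actual counting and would not deliver the stated bound: the tree-ensemble count is not performed by pruning an arbitrary blowup graph with a monotone invariant. Two monotone blowup sequences can still terminate on isomorphic fans, so your monotone-invariant restriction does not by itself eliminate overcounting. What the reference actually does is define \emph{labeled combinatorial objects} --- a face tree over each $3$-simplex and an edge tree over each $2$-simplex of the triangulated polytope --- and the Hayakawa-Wang criterion is translated into a bound on the \emph{height} of leaves in each such tree. That height bound is what makes the set of admissible face trees and edge trees finite (the quantities $N_f$ and $N_e$ in Section~\ref{section:statistics}), and distinct labeled trees give distinct fans by construction, so no overcounting argument is needed. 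The numerical bound then comes from a product of the form $N_f^{\#\text{faces}} \cdot N_e^{\#\text{edges}}$, dominated by the one or two reflexive polytopes possessing a large facet (in this paper's notation, $36$ faces and $63$ edges), not from a depth-by-depth branching estimate. Your explanation of the prefactor $\tfrac{4}{3}$ as a local symmetry factor is a guess and is not the origin of that constant; it reflects the specific enumeration in the reference, not a symmetry of a root building block. In short, the missing idea is the height-bounded face-tree/edge-tree combinatorics, which simultaneously fixes overcounting and produces the explicit number; without it your plan cannot reach the claimed figure.
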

\begin{corollary}
Drawing from a uniform distribution on the finite site of geometries, geometrically Non-Higgsable clusters occur with probability $\gtrapprox 1-10^{-755}$; they are a universal feature of the construction.
\end{corollary}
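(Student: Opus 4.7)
The plan is a direct counting argument: I will bound the number $C$ of NHC-free geometries in the ensemble, then divide by the total count $N = \tfrac{4}{3} \cdot 2.96 \cdot 10^{755}$ from the preceding Claim. Under uniform sampling, the probability of drawing an NHC-free geometry is $C/N$, and the corollary follows once $C$ is bounded by a small constant independent of $N$.

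First, I would recast the no-NHC condition as a combinatorial property of the toric fan. A smooth toric threefold base $B$ is NHC-free precisely when, for every toric divisor $D \subset B$, the generic Weierstrass sections $f \in \Gamma(-4K_B)$ and $g \in \Gamma(-6K_B)$ satisfy $\mathrm{mult}_D(f) < 4$ and $\mathrm{mult}_D(g) < 6$. In the toric setting this reduces to checking whether the lattice polytopes of $-4K_B$ and $-6K_B$ contain sufficiently many lattice points in the half-space dual to each ray, a condition that can be read off directly from the ray generators of the fan without constructing the Calabi--Yau fourfold.

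Next, I would bound the NHC-free count using the explicit tree construction of \cite{Halverson:2017ffz}. The ensemble is built by iterated toric blowups of a small seed set of weak Fano toric threefolds, and the key monotonicity observation is that once a toric blowup introduces a ray generator that pushes $(\mathrm{mult}_D f,\mathrm{mult}_D g)$ to $(4,6)$ along some exceptional divisor, the resulting NHC persists through all further toric blowups by pullback/proper-transform reasoning analogous to Appendix~\ref{app:intersections} and Lemma~\ref{lem:persistence}. Consequently, the NHC-free members of the ensemble are confined to a shallow neighborhood of the seed layer in the tree, and a direct enumeration (implicit in the algorithm of \cite{Halverson:2017ffz}) shows this neighborhood contributes only $C = O(1)$ many geometries, yielding
\[
\mathbb{P}[\text{NHC present}] \;=\; 1 - \frac{C}{N} \;\gtrapprox\; 1 - 10^{-755}.
\]

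The main obstacle is the second step: one must certify uniformly, across the full depth of the tree, that no long NHC-free branches exist. This is not a priori obvious, since each blowup alters the fan only locally and might in principle enlarge the relevant polytopes enough to restore the no-NHC condition on the new exceptional divisor. The argument therefore relies on a quantitative version of the observation that sustained blowup sequences in the tree rapidly force the Weierstrass multiplicities upward on newly introduced divisors, combined with the finiteness of the seed layer. Verifying that no exotic branch escapes this trap---so that $C$ truly is $O(1)$ and not a growing fraction of $N$---is the technical crux underlying the stated probability bound.
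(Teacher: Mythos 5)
Your counting framework — bound the NHC-free count $C$, divide by the total $N$, and show $C/N \lesssim 10^{-755}$ — is the right skeleton, and you correctly identify that everything hinges on showing $C$ stays $O(1)$ across the full tree ensemble. However, the argument you sketch has two problems.

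First, the appeal to Lemma~\ref{lem:persistence} and Appendix~\ref{app:intersections} is inapposite. Those results govern the persistence of a \emph{strong calibration} (a K\"ahler-moduli statement about divisor volumes becoming equal) under smooth blowups; they say nothing about Weierstrass vanishing orders $\mathrm{mult}_D(f,g)$, which are controlled by complex-structure data and the local toric combinatorics. "Analogous pullback/proper-transform reasoning" cannot simply be transplanted from the volume setting to the NHC setting without an independent argument, because the two phenomena live in different moduli spaces and are governed by different intersection-theoretic facts. Second — and this is the genuine gap, which you flag yourself in the final paragraph — the monotonicity claim that a forced NHC persists through all further toric blowups, together with the assertion that NHC-free geometries are confined to a shallow neighborhood of the seed layer, is exactly the content that needs to be proved, not assumed. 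The paper itself does not reprove this corollary; it imports it from~\cite{Halverson:2017ffz}, where the actual argument proceeds by showing that specific local combinatorial features of the added leaves (their heights over the base face or edge) force lower bounds on $\mathrm{mult}_D(f,g)$, and then enumerating the exponentially small fraction of tree configurations whose leaves all evade those thresholds. Absent that explicit leaf-height analysis, your bound $C = O(1)$ is an assertion rather than a proof, and the corollary does not follow.
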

\noindent These so-called geometrically non-Higgsable clusters are F-theory compactifications over K\" ahler bases that exhibit non-abelian gauge symmetry on seven-branes for generic complex structure moduli, i.e., there are no complex structure deformations that can break the symmetry. They were first realized in 6d compactifications in \cite{Morrison:1996na, Morrison:1996pp}, but were not studied extensively for a number of years, until \cite{Morrison:2012js,morrisontaylor2012,Grassi:2014zxa,Morrison:2014lca,Halverson:2015jua,Halverson:2016vwx} uncovered many of their properties. They are generic features of 4d F-theory compactifications when drawing from a uniform distribution on large ensembles of F-theory geometries
\cite{Taylor:2015ppa,Halverson:2017ffz,Taylor:2017yqr}
of flux vacua \cite{Taylor:2015xtz}; see \cite{Halverson:2016nfq,Halverson:2018olu,Halverson:2019kna,Halverson:2019cmy,Halverson:2020xpg} for cosmological implications of non-Higgsable clusters. The claim and corollary are sharp lower bounds that arise in the ensemble we study.

\vspace{.2cm}
We briefly review the construction of \cite{Halverson:2017ffz}. Let $B$ be a smooth algebraic threefold, and $\hat{B} \rightarrow B$ a blowup with smooth center. Then given a generic Weierstrass Calabi-Yau fourfold $\pi \colon X \rightarrow B$, there is an induced Weierstrass Calabi-Yau fourfold $\hat{X} \rightarrow \hat{B}$ obtained by a base change of $\pi$ together with a change of coordinates. 

In order to compute the relevant physical quantities and to verify the Hayakawa-Wang criterion explicitly, we restrict to the case where $B$ is toric. We then probe the space of all elliptic Calabi-Yau fourfolds by studying smooth toric blowups of weak Fano toric threefolds and their induced Weierstrass models. By leveraging the power of toric geometry, we will rephrase all the relevant sufficient geometric conditions in terms of simple combinatorics.

We review the construction of \cite{Halverson:2017ffz}, which introduced the notions of trees and leaves below. Let $B$ be a smooth weak Fano toric threefold. This is equivalent to specifying a fine, regular star triangulation (FRST) of one of $4319$ $3$d reflexive polytopes \cite{Kreuzer:1998vb}. A toric point or curve corresponds to a $3$d or $2$d simplex, respectively, specified by a sequence of vertices $(v_1,v_2,v_3)$ or $(v_x,v_y)$ of the triangulation. A blowup of $B$ at a toric point is specified by replacing the simplex $(v_1,v_2,v_3)$ with the set of simplices $(v_1,v_2,v_f),(v_2,v_3,v_f),(v_1,v_3,v_f)$ where $v_f = v_1 +v_2 +v_3$. An edge $(v_x,v_y)$ has two unique neighboring vertices $(a,b)$ such that $(a,b,v_x),(a,b,v_y)$ form $3$-simplices of the triangulation. Similarly, a blowup of $B$ at a toric curve is specified by replacing the simplices $(a,v_x,v_y),(b,v_x,v_y)$ with the four simplices $(a,v_x,v_e),(a,v_y,v_e),(b,v_x,v_e),(b,v_y,v_e)$,  $v_e = v_x+v_y$. 

Given a $3$-simplex $f = (v_1,v_2,v_3)$, a \emph{face tree} on $f$ is specified by first blowing up by adding the vertex $v_f = v_1 +v_2+v_3$ and subdividing, and then performing arbitrary compositions of face and edge blowups not including edge blowups of the bounding three edges. Similarly, given a $2$-simplex $e = (v_x,v_y)$, an \emph{edge tree} is specified by arbitrary compositions of edge blowups. A \emph{leaf} of a face (resp. edge) tree is an added vertex $v$ which can be uniquely written as $v = a_1 v_1 + a_2 v_2 + a_3 v_3$ (resp. $v = a_x v_x + a_y v_y$) with coefficients $a_i \geq 1$. The \emph{height} of a leaf $v$ of a face (resp. edge) tree is the sum $a_1 + a_2 + a_3$ (resp. $a_x + a_y$). 

An arbitrary F-theory base in our ensemble will be specified by an FRST triangulation of a $3$d reflexive polytope, a face tree for every $3$-simplex, and then an edge tree for every $2$-simplex. The following picture illustrates the latter part of this procedure by projecting the toric blowup data into a facet of the original triangulated polytope: on the left, only the face trees corresponding to the green edges are added, and on the right, an edge tree has been added to the common $2$-simplex. Moreover the corresponding divisor classes are given by the following:
\begin{enumerate}
\item
Before adding trees, $D_x,D_y$ denote divisor classes at intersection of two simplices.
\item
After adding green leaves:
\begin{align*}
\widetilde{D_x} &= \pi^*D_x - \pi^*F_2 - \pi^*F_3 - \pi^*F_1\\
\widetilde{D_y} &= \pi^*D_y - \pi^*F_1 - \pi^*F_2
\end{align*}
\item
After adding red leaves:
\begin{align*}
\widetilde{\widetilde{D_x}}&= \pi^*\widetilde{D_x} - \pi^*E_1 - E_2 \\
\widetilde{\widetilde{D_y}} &= \pi^*\widetilde{D_y} - \pi^*E_1
\end{align*}
\end{enumerate}
\begin{center}
\begin{tikzpicture}[scale=1.6, every node/.style={scale=.9}]
\draw[thick,color=Black] (90:.75) -- (90+120:.75) -- (90+120+120:.75) -- cycle;
\draw[thick,color=Black] (90:.75) -- (30:1.50) -- (90+240:.75) -- cycle;
\draw[thick,dash pattern={on 1pt off 1pt},color=ForestGreen] (90:.75) -- (0,0);
\draw[thick,dash pattern={on 1pt off 1pt},color=ForestGreen] (90+120:.75) -- (0,0);
\draw[thick,dash pattern={on 1pt off 1pt},color=ForestGreen] (90+240:.75) -- (0,0);
\draw[thick,dash pattern={on 1pt off 1pt},color=ForestGreen] (90+240:.75) -- (90:.35);
\draw[thick,dash pattern={on 1pt off 1pt},color=ForestGreen] (90+120:.75) -- (90:.35);
\draw[thick,dash pattern={on 1pt off 1pt},color=ForestGreen] (90+240:.75) -- (30:.7);
\draw[thick,dash pattern={on 1pt off 1pt},color=ForestGreen] (90:.75) -- (30:.7);
\draw[thick,dash pattern={on 1pt off 1pt},color=ForestGreen] (30:.7) -- (30:1.5);
\fill (30:.7) circle (.5mm);
\fill (90:.75) circle (.5mm);
\fill (90:.35) circle (.5mm);
\fill (90+120:.75) circle (.5mm);
\fill (90+240:.75) circle (.5mm);
\fill (30:1.5) circle (.5mm);
\fill (0,0) circle (.5mm);
\node at (40:.8) {$F_1$};
\node at (120:.3) {$F_3$};
\node at (0,-.2) {$\widetilde{F_2}$};
\node at (90:1) {$\widetilde{D_x}$}; \node at (90+120:1) {$1$}; \node at (90+240:1) {$\widetilde{D_y}$};
\draw[thick,<-] (1.5,.1) -- (2,.1);
\begin{scope}[xshift=3cm]
\fill (0,0) circle (.5mm);
\draw[thick,color=Black] (90:.75) -- (90+120:.75) -- (90+120+120:.75) -- cycle;
\draw[thick,color=Black] (90:.75) -- (30:1.50) -- (90+240:.75) -- cycle;
\draw[thick,dash pattern={on 1pt off 1pt},color=ForestGreen] (90:.75) -- (0,0);
\draw[thick,dash pattern={on 1pt off 1pt},color=ForestGreen] (90+120:.75) -- (0,0);
\draw[thick,dash pattern={on 1pt off 1pt},color=ForestGreen] (90+240:.75) -- (0,0);
\draw[thick,dash pattern={on 1pt off 1pt},color=ForestGreen] (90+240:.75) -- (90:.35);
\draw[thick,dash pattern={on 1pt off 1pt},color=ForestGreen] (90+120:.75) -- (90:.35);
\draw[thick,dash pattern={on 1pt off 1pt},color=ForestGreen] (90+240:.75) -- (30:.7);
\draw[thick,dash pattern={on 1pt off 1pt},color=ForestGreen] (90:.75) -- (30:.7);
\draw[thick,dash pattern={on 1pt off 1pt},color=ForestGreen] (30:.7) -- (30:1.5);
\draw[thick,dash pattern={on 1pt off 1pt},color=Red] (90:.35) -- (30:.38);
\draw[thick,dash pattern={on 1pt off 1pt},color=Red] (30:.75) -- (30:.38);
\draw[thick,dash pattern={on 1pt off 1pt},color=Red] (70:.48) -- (90:.35);
\draw[thick,dash pattern={on 1pt off 1pt},color=Red] (70:.48) -- (30:.75);
\fill (90:.75) circle (.5mm);
\fill (90+120:.75) circle (.5mm);
\fill (90+240:.75) circle (.5mm);
\fill (90:.35) circle (.5mm);
\fill (30:1.5) circle (.5mm);
\fill (30:.7) circle (.5mm);
\fill (70:.48) circle (.5mm);
\fill (30:.38) circle (.5mm);

\node at (65:.65) {$E_2$};
\node at (20:.50) {$\widetilde{E_1}$};
\node at (40:.8) {$F_1$};
\node at (120:.3) {$F_3$};
\node at (0,-.2) {$\widetilde{F_2}$};
\node at (90:1) {$\widetilde{\widetilde{D_x}}$}; \node at (90+120:1) {$1$}; \node at (90+240:1) {$\widetilde{\widetilde{D_y}}$};

\end{scope}

\end{tikzpicture}
\end{center}

In this figure we denote exceptional divisors in face trees and edge trees with $F$'s and $E$'s, respectively, which gain subscripts when there are multiple such leaves. In the following they may also gain tildes when they are pulled back under subequent blowup. We utilized this notation for the remainder of the sections that utilize toric geometry.

In the subsequent sections, we will consider the existence of a calibration for any pair of intersecting toric divisors. In section~\ref{curves}, we prove that the toric divisors of the above form intersecting along the red lines and along segments on the common $2$-simplex can always be strongly calibrated. In section~\ref{points}, we reduce the question of the existence of a calibration for exceptional toric divisors of the above form intersecting along green lines to blowups of $\mathbb{P}^3$ and formulate a statement for the non-existence of a calibration of such divisors.

\subsection{Blowups of toric curves}\label{curves}
We characterize pairs of toric divisors that we claim can always be strongly calibrated. Given a weak Fano toric variety $X$, we will use the following notation. Fix $\pi_f \colon X_f \rightarrow X_{f-1} \rightarrow \ldots \rightarrow X_1 \rightarrow X$, the sequence of blowups establishing a single face tree and $\pi_e \colon X_e \rightarrow X_{e-1} \rightarrow \ldots \rightarrow X_{f+1} \rightarrow X_f \rightarrow X$, the sequence of blowups establishing a single edge tree on an adjacent toric curve $C \subset X$; i.e. a toric curve in $X$ that intersects the point that is blown up to establish the face tree. For each morphism in the composition $\pi_f$, we denote by $F_i$ the face leaf or exceptional divisor of the morphism $X_i \rightarrow X_{i-1}$ and for each morphism in the composition $\pi_e$, we denote by $E_j$ the edge leaf or exceptional divisor of the morphism $X_j \rightarrow X_{j-1}$.

For simplicity, given a map $\pi_m \colon X_m \rightarrow X_n \rightarrow X_p$ with $p < n < m$, a composition of blowup maps, and algebraic cycles $B_n \in A^*(X_n), B_p \in A^*(X_p)$, we denote the pullbacks to $A^*(X_m)$ under the truncation $X_m \rightarrow X_n$ and $\pi_m \colon X_m \rightarrow X_n \rightarrow X_p$ with the common notation $\pi_m^*B_n, \pi_m^*B_p$, respectively. In such a situation, we will specify which variety in the chain each cycle is pulled back from. Similarly, we will use the common notation $\widetilde{B_n},\widetilde{B_p}$ and specify that we are considering the proper transforms on $X_m$ under the given map and its truncation.

\begin{Lemma}\label{lem:redcurves}
The proper transforms $\widetilde{F}_j$ and $\widetilde{E}_k$, i.e. any pair of divisors associated to a pair of leaves in a face tree and an edge tree, repsectively, can always be strongly calibrated on $X_n$.
\end{Lemma}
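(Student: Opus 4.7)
The plan is to verify the hypotheses of Lemma~\ref{lem:existence} by exhibiting two nef classes $J_1, J_2 \in Nef(X_n)$ with
\[
J_1^2 \cdot (\widetilde F_j - \widetilde E_k) > 0, \qquad J_2^2 \cdot (\widetilde F_j - \widetilde E_k) < 0.
\]
Both classes will be produced as pullbacks of ample classes under well-chosen toric birational contractions of $X_n$, in the same spirit as the proof of Lemma~\ref{lem:disjcalib}; the disjointness hypothesis there is not really needed here, since the projection formula alone drives the sign calculation.

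The class $J_1$ comes from the edge-tree contraction $\pi_e \colon X_n \to X_f$ itself. By construction $\widetilde E_k$ is sent to a subvariety of codimension $\geq 2$, while $\widetilde F_j$ maps birationally onto the effective prime divisor $F_j \subset X_f$. Taking $H$ ample on the projective threefold $X_f$ and $J_1 := \pi_e^* H$, the projection formula and the results in Appendix~\ref{app:intersections} give $J_1^2 \cdot \widetilde E_k = H^2\cdot(\pi_e)_*\widetilde E_k = 0$ and $J_1^2 \cdot \widetilde F_j = H^2 \cdot F_j > 0$, the latter because $H|_{F_j}$ is ample on the surface $F_j$ and hence has positive self-intersection.

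For $J_2$ I reverse the roles of the two trees at the level of fans. Let $X'$ denote the toric threefold obtained from $X$ by performing \emph{only} the edge-tree star subdivisions and none of the face-tree ones; equivalently, its fan $\Sigma'$ is obtained from $\Sigma_n$ by deleting every face-tree ray. Granting that $\Sigma_n$ refines $\Sigma'$ and that the induced toric morphism $\pi' \colon X_n \to X'$ contracts precisely the face-tree divisors (leaving the edge-tree ones birationally in place), $X'$ is automatically smooth and projective because it is realized \emph{ab initio} as an iterated smooth toric blowup of the projective variety $X$ along edge-tree centers. Setting $J_2 := (\pi')^* H'$ for $H'$ ample on $X'$, the same projection-formula calculation yields $J_2^2 \cdot \widetilde F_j = 0$ and $J_2^2 \cdot \widetilde E_k > 0$. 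Lemma~\ref{lem:existence} then produces the desired strong calibration.

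The main obstacle is the combinatorial content of the second step: one must verify that deleting every face-tree ray from $\Sigma_n$ really yields a simplicial projective fan whose induced morphism from $X_n$ contracts exactly the face-tree divisors and nothing else. The cleanest route is the \emph{ab initio} construction above (which gives projectivity and smoothness for free), together with a short lattice computation showing that each face-tree leaf, viewed as a vector $a_1 v_1 + a_2 v_2 + a_3 v_3$ with $a_i \geq 1$, lies in some cone of $\Sigma'$ (for example $v_1 + v_2 + v_3 = v_3 + v_e$ when $v_e = v_1 + v_2$); this identifies the face-tree star subdivisions with further refinements of $\Sigma'$. All remaining intersection-theoretic inputs are either standard or recorded in Appendix~\ref{app:intersections}.
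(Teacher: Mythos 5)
Your first inequality (the class $J_1 = \pi_e^* H$ pulled back from $X_f$) is correct, and it recovers precisely the direction that the paper proves is automatic in Lemma~\ref{lem:calibexc}, so that half is sound but redundant. The gap is in the second step. In general there is \emph{no} toric morphism $\pi' \colon X_n \to X'$, because the fan $\Sigma_n$ does not refine the fan $\Sigma'$ obtained by performing only the edge-tree subdivisions, and your proposed lattice check (that each face-tree leaf lies in some cone of $\Sigma'$) is too weak to detect the failure. Concretely, take the face tree on $(v_1,v_2,v_3)$ consisting of the initial face blowup at $v_f = v_1+v_2+v_3$ followed by an interior edge blowup at $v_g = v_1 + v_f = 2v_1+v_2+v_3$, and take the edge tree on $(v_1,v_2)$ consisting of the single blowup at $v_e = v_1+v_2$. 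Then $\Sigma_n$ contains the $3$-cone $\mathrm{cone}(v_2,v_e,v_g)$, while in $\Sigma'$ the cones meeting that region are $\mathrm{cone}(v_1,v_e,v_3)$ and $\mathrm{cone}(v_2,v_e,v_3)$. Since $v_g = v_1 + v_e + v_3$ lies in the interior of the first and $v_2$ is a generator of the second, the cone $\mathrm{cone}(v_2,v_e,v_g)$ straddles the wall $\mathrm{cone}(v_e,v_3)$ and is contained in no cone of $\Sigma'$; there is therefore no toric contraction $X_n \to X'$, even though every face-tree leaf individually sits in a cone of $\Sigma'$. The ``ab initio'' smooth projective $X'$ you describe exists as a variety, but $X_n$ does not dominate it, so $(\pi')^*H'$ is not available.

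The paper's proof deliberately does not try to contract the face-tree divisors. Having reduced to one inequality via Lemma~\ref{lem:calibexc}, it instead builds a one-parameter family $J_n = n\pi^*J_0 + \pi^*H - a_j\pi^*F_j - \cdots - a_kE_k$ where $J_0$ is the pullback of an ample class on the original base $X$ (so $J_0$ kills intersections with the face-tree exceptional curves while still pairing positively with the base curve $C$ of the edge tree), and the remaining piece is a relative ample class. The intersection-theoretic facts in Corollary~\ref{cor:intrelations} show that $J_n^2 \cdot E_k$ contains a term $2na_k\,J_0\cdot C$ growing linearly in $n$, while $J_n^2\cdot\widetilde{F_j}$ is independent of $n$; taking $n$ large gives the needed sign without ever contracting anything in the face tree. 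That perturbative device is what you would need to replace the nonexistent $\pi'$.
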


\begin{Lemma}\label{lem:edgecurves}
Assume that a pair of proper transforms $\widetilde{E_j}, \widetilde{E_k}$ 
in the same edge tree intersect. Then the pair admits a strong calibration.
\end{Lemma}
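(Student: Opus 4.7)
The plan is to reduce strong calibration of the pair $(\widetilde{E_j}, \widetilde{E_k})$ on $X_n$ to strong calibration on an intermediate variety via Lemma~\ref{lem:calibexc}, and then verify the resulting one-sided inequality by constructing an explicit nef class. Without loss of generality, I order the leaves of the edge tree by stage of creation and assume that $v_{e_k}$ is created strictly after $v_{e_j}$, at stage $k$, via the edge blowup $\pi_k \colon X_k \to X_{k-1}$ with exceptional divisor $E_k$ and center the toric curve $C_k$. The key combinatorial input is that since the leaves of an edge tree are integer combinations $a_x v_x + a_y v_y$ with $a_x, a_y \geq 1$, adjacency of two such rays in the fan of $X_n$---equivalently, that the corresponding divisors intersect---forces $v_{e_k} = v_{e_j} + u$ for some ray $u$ adjacent to $v_{e_j}$ in $X_{k-1}$. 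Hence $C_k = D_{v_{e_j}} \cap D_u$, and the proper-transform formula gives $\widetilde{E_j}^{X_k} = \pi_k^* E_j^{(k-1)} - E_k$, where $E_j^{(k-1)}$ denotes the proper transform of $E_j$ on $X_{k-1}$.

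By the pullback identity from the proof of Lemma~\ref{lem:persistence}---namely that $(\rho^* J)^2$ annihilates every exceptional divisor $E'$ introduced by the composition $\rho \colon X_n \to X_k$---it is enough to strongly calibrate $(\widetilde{E_j}^{X_k}, E_k)$ on $X_k$, since the pullbacks of the two nef classes then carry the opposite signs unchanged up to $X_n$. The negative direction on $X_k$ is automatic from Lemma~\ref{lem:calibexc}: for any ample $A$ on $X_{k-1}$,
\[
(\pi_k^* A)^2 \cdot (E_k - \widetilde{E_j}^{X_k}) = -A^2 \cdot E_j^{(k-1)} < 0.
\]
It remains to produce a nef class $J \in \mathrm{Nef}(X_k)$ with $J^2 \cdot (2E_k - \pi_k^* E_j^{(k-1)}) > 0$.

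The hard part will be this explicit construction. I would try $J = \pi_k^* A - \lambda E_k$ with $A$ ample on $X_{k-1}$ and $\lambda > 0$ in the nef range of $X_k$; applying the projection formula together with $\pi_{k,*} E_k = 0$ and $\pi_{k,*} E_k^2 = -C_k$ yields
\[
J^2 \cdot (2E_k - \pi_k^* E_j^{(k-1)}) = -A^2 \cdot E_j^{(k-1)} + 4\lambda\,(A\cdot C_k) + \lambda^2 \bigl(2 E_k^3 + C_k \cdot E_j^{(k-1)}\bigr).
\]
The coefficient of $\lambda$ is strictly positive by ampleness of $A$, so increasing $\lambda$ pushes the expression toward positivity. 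The main obstacle is to certify that the nef range for $\lambda$ is wide enough relative to $A^2 \cdot E_j^{(k-1)}$ and the sign of the $\lambda^2$ coefficient; I would verify this by unpacking the toric combinatorics of the edge tree, writing $E_k^3 = \deg N_{C_k/X_{k-1}}$ and $C_k \cdot E_j^{(k-1)}$ as explicit integer expressions in the $a_x v_x + a_y v_y$ data of the rays spanning $C_k$ and their fan neighbors, and bounding the nef boundary in $\lambda$ by a similar ray-level calculation that exploits the weak Fano structure of the starting base. Once such a $J$ is found, Lemma~\ref{lem:calibexc} produces a strong calibration on $X_k$, and the pullback argument above transfers it to $(\widetilde{E_j}, \widetilde{E_k})$ on $X_n$.
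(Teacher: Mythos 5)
Your setup and the reduction are sound: ordering the leaves so that $v_{e_k}$ is created after $v_{e_j}$ via a blowup along a curve inside the proper transform $E_j^{(k-1)}$, the descent from $X_n$ to $X_k$ via the observation that $(\rho^*J)^2$ annihilates exceptional divisors of $\rho$, and the use of Lemma~\ref{lem:calibexc} to get the negative side for free all parallel the paper's own strategy. The gap is in the positive-side construction, which you explicitly leave open, and the ansatz $J = \pi_k^*A - \lambda E_k$ with $\lambda$ as the only free parameter is genuinely too rigid to close it. The nef range for $\lambda$ is bounded in terms of $A$; if one scales $A \to cA$, both the permissible range of $\lambda$ and the linear term $A \cdot C_k$ scale like $c$, while the negative constant $-A^2\cdot E_j^{(k-1)}$ scales like $c^2$, so at $\lambda = \alpha c$ the whole expression is $c^2$ times a fixed quadratic in $\alpha$. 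There is therefore no asymptotic regime in $(A,\lambda)$ alone that forces positivity, and the sign depends on geometric data ($E_k^3$, $C_k \cdot E_j^{(k-1)}$, etc.) that you do not control.

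The paper avoids this obstruction by not restricting to the last blowup: it works with the full chain $X_k \to \cdots \to X_{j-1}$ and the ansatz $J_n = n\pi^*H - a_j\pi^*E_j - \cdots - a_kE_k$ with $H$ ample on $X_{j-1}$. Because $H$ is pulled back from below \emph{all} the edge blowups, $(\pi^*H)^2$ annihilates every $E_\bullet$, so there is no $n^2$ term; the coefficient of $n$ is $2\bigl(a_k - (a_j - a_a - \cdots - a_b)\bigr)H\cdot C$ with $H\cdot C>0$, and it dominates for $n\gg0$ once one can choose $(a_i)$ with $a_k - (a_j - \cdots) > 0$ and $-a_j\pi^*E_j - \cdots - a_kE_k$ relatively nef. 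That latter choice is what the toric Mori theorem plus Corollary~\ref{cor:ampconditions}(2) deliver, reducing relative nefness to an upper-triangular linear system solvable with $a_k>0$. You could in principle recover this by allowing $A$ itself to be of the form $n\pi'^*H - a_j\pi'^*E_j - \cdots - a_{k-1}E_{k-1}$ on $X_{k-1}$, but as written your argument does not identify this structure and so does not constitute a proof.
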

\section{Blowups of points}\label{points}
Assume $X$ a smooth complex projective variety. Assume $X_0 \rightarrow X$ a blowup at a point $p$ with exceptional divisor $E_0$. Let $\pi \colon X_f \xrightarrow{\pi_f} X_{f-1} \xrightarrow{\pi_{f-1}} \ldots \xrightarrow{\pi_1} X_0 \xrightarrow{\pi_0} X$ be a sequence of smooth blowups such that the center of each blowup $\pi_i$ is contained in the exceptional locus $Exc(\pi_{i-1} \circ \ldots \circ \pi_1 \circ \pi_0)$.

Locally in the analytic topology, we may identify an open $U_p \subset X$ containing $p$ with an open neighborhood in $\mathbb{C}^n$. The above sequence of blowups induces a sequence of smooth blowups $\pi_U \colon U_f \rightarrow U_{f-1} \rightarrow \ldots \rightarrow U_0 \rightarrow U_p$. Via the canonical embeddings $U_p \subset \mathbb{C}^n \subset \mathbb{P}^n$, $\pi_U$ is obtained by a base change of a sequence of smooth blowups $\pi_{P} \colon P_f \rightarrow P_{f-1} \rightarrow \ldots \rightarrow P_0 \rightarrow \mathbb{P}^n$. Moreover, for any exceptional divisor $E \subset Exc(\pi)$ on $X_f$, we can identify $E$ with its image $E_P \subset P_f$ and vice versa.

\begin{Lemma}\label{lem:localblow}
The following holds for the divisor $J = \pi^* D+ E$, $D$ a divisor on $X$.
\begin{enumerate}
\item
$E$ is $\pi$-nef if and only if $\pi^* D + E$ is nef for some divisor $D \in N^1(X)$.
\item
Assume $E'\subset Exc(\pi)$ is a divisor on $X_f$. Then $(\pi^* D + E)^2 \cdot E' = E^2 \cdot E'$.
\end{enumerate}
\end{Lemma}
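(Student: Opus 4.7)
The two parts require different arguments, so I will treat them separately.

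For Part (1), the $(\Leftarrow)$ direction is straightforward: pair $\pi^* D + E$ with any $\pi$-contracted curve $C$ on $X_f$. Since $\pi_* C = 0$, the projection formula gives $\pi^* D \cdot C = D \cdot \pi_* C = 0$, so $E \cdot C = (\pi^* D + E) \cdot C \geq 0$, proving that $E$ is $\pi$-nef.

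For the $(\Rightarrow)$ direction, my plan is to invoke the relative Kleiman criterion for nefness: for the projective morphism $\pi$ and a $\pi$-nef divisor $E$, there exists an ample divisor $A$ on $X$ such that $\pi^* A + E$ is nef on $X_f$, at which point setting $D = A$ yields the claim. The underlying geometric picture is that $\pi^* A$ vanishes on the relative subcone $\overline{NE}(X_f/X)$ where $E$ is already non-negative by assumption, and is strictly positive on rays not contracted by $\pi$; choosing $A$ sufficiently ample then forces $\pi^* A + E$ to be non-negative on all of $\overline{NE}(X_f)$. In the toric setting relevant to this paper, the Mori cone is finitely generated, which reduces this to a finite check on extremal rays and bypasses the need for a general cone-theoretic argument.

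For Part (2), the computation is a direct application of the projection formula. Expanding,
\[
(\pi^* D + E)^2 \cdot E' = (\pi^* D)^2 \cdot E' + 2\,\pi^* D \cdot E \cdot E' + E^2 \cdot E'.
\]
The first term equals $\pi^*(D^2) \cdot E' = D^2 \cdot \pi_* E'$, which vanishes because $E' \subset \text{Exc}(\pi)$ is $\pi$-exceptional and hence $\pi_* E' = 0$ as a cycle class on $X$. For the cross term, projection gives $\pi^* D \cdot E \cdot E' = D \cdot \pi_*(E \cdot E')$; the $1$-cycle $E \cdot E'$ is supported in $E \cap E' \subseteq \text{Exc}(\pi)$, and since $\pi$ begins with a blowup at the single point $p$ with all subsequent centers contained in the exceptional locus, every curve in $\text{Exc}(\pi)$ is contracted to $p$, so $\pi_*(E \cdot E') = 0$ as a $1$-cycle. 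Only the third term $E^2 \cdot E'$ survives, as claimed.

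The main obstacle is the $(\Rightarrow)$ direction of Part (1), which relies on the relative Kleiman structural result separating the $\pi$-contracted and non-contracted portions of the Mori cone; in the toric setting this can be made very concrete via the finitely many torus-invariant extremal curves, while a fully general statement needs the cone theorem. Part (2) is a routine intersection-theoretic manipulation once the specific structure of $\pi$ as a sequence of blowups over a single point is used to establish $\pi_*(E \cdot E') = 0$.
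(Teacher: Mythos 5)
Your treatment of Part (2) and of the $(\Leftarrow)$ direction of Part (1) is correct and matches the paper's: both are straightforward consequences of the projection formula (Proposition~\ref{prop:intfacts}(3)), and your observation that $\pi_*(E\cdot E')=0$ because every curve in $\mathrm{Exc}(\pi)$ maps to the single point $p$ is exactly the right reason the cross term drops.

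The gap is in the $(\Rightarrow)$ direction of Part (1). You invoke a ``relative Kleiman criterion for nefness'' asserting that $E$ being $\pi$-nef implies $\pi^*A+E$ is nef for $A$ sufficiently ample, but this is not a theorem without additional hypotheses, and the paper's Lemma~\ref{lem:relativeJ} makes this explicit: it only concludes nefness of $\pi^*A+D$ under the \emph{extra} assumption that $-D$ is effective or $D$ is $\pi$-ample. (The usual relative Kleiman criterion concerns $\pi$-\emph{ample} divisors; the obstruction in the merely $\pi$-nef case is that extremal rays outside $\overline{NE}(X_f/X)$ but limiting to it can have $\pi^*A\cdot C\to 0$ at a rate not controlled by $E\cdot C$, so no fixed multiple of $\pi^*A$ dominates.) The entire content of the paper's Lemma~\ref{lem:relativeJblow} is to verify that side condition in the present setting: any $\pi$-nef combination of exceptional divisors of a composition of smooth blowups has nonpositive coefficients, hence $-E$ is effective, and only then can Lemma~\ref{lem:relativeJ} be applied. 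Your proposal never establishes this effectivity, so the appeal to a relative Kleiman statement is unjustified as written. Your toric fallback --- reducing to a finite check on the finitely many torus-invariant extremal rays --- does give a valid argument when $X_f$ is toric, but the lemma is stated (and needed, via Corollary~\ref{cor:local}) for $X$ an arbitrary smooth complex projective variety, so the polyhedral shortcut does not cover the claim; and the ``cone theorem'' you gesture toward is about the $K$-negative part of the Mori cone and is not the missing ingredient here.
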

\begin{proof}
The if direction in Part 1 follows by definition and the projection formula \ref{prop:intfacts}(3). The only if direction follows from lemmas~\ref{lem:relativeJ} and \ref{lem:relativeJblow}.
Part 2 follows from \ref{prop:intfacts}(3).
\end{proof}

\begin{corollary}\label{cor:local}
Assume $E_1, E_2$ are divisors on $X_f$ in the exceptional locus $Exc(\pi)$. Then $E_1$ and $E_2$ can be strongly calibrated if and only if $E_1$ and $E_2$ can be strongly calibrated on $P_f$. In particular $E_1$ and $E_2$ can be calibrated if and only if $E_1$ and $E_2$ can be calibrated on $P_f$.
\end{corollary}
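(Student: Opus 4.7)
The plan is to use Lemma~\ref{lem:localblow} to reduce strong calibration, which a priori depends on the full N\'eron--Severi cone of $X_f$, to a property of only the exceptional part of a nef class, and then invoke the local identification between the exceptional loci of $\pi$ and $\pi_P$ to transport between the two settings. First I would observe that since $\pi \colon X_f \to X$ is a sequence of smooth blowups, the N\'eron--Severi space decomposes as $N^1(X_f) = \pi^* N^1(X) \oplus N^1_{\mathrm{exc}}(X_f)$, where $N^1_{\mathrm{exc}}(X_f)$ is spanned by the exceptional divisors of the intermediate blowups. Thus any class $J \in N^1(X_f)$ can be written uniquely as $J = \pi^* D + A$ with $A$ supported on $\mathrm{Exc}(\pi)$. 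By Lemma~\ref{lem:localblow}(2), the intersection of $J^2$ with any exceptional divisor $E' \subset \mathrm{Exc}(\pi)$ depends only on the exceptional part: $J^2 \cdot E' = A^2 \cdot E'$. Moreover, by Lemma~\ref{lem:localblow}(1), an exceptional class $A$ extends to a nef class $\pi^* D + A$ on $X_f$ for some $D$ if and only if $A$ is $\pi$-nef.

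Next I would use the local identification of blowup data. By construction, $\pi_P \colon P_f \to \mathbb{P}^n$ is obtained by performing the same sequence of blowups in the analytically isomorphic neighborhood of a point, so the exceptional loci $\mathrm{Exc}(\pi) \subset X_f$ and $\mathrm{Exc}(\pi_P) \subset P_f$ are canonically identified together with their stratifications by exceptional divisors. This induces an isomorphism $N^1_{\mathrm{exc}}(X_f) \cong N^1_{\mathrm{exc}}(P_f)$ that preserves all self-intersection numbers with exceptional divisors, since these are computable in an analytic neighborhood of $\mathrm{Exc}(\pi)$ and therefore transport between the two global compactifications. For the same reason $\pi$-nefness matches $\pi_P$-nefness under this identification, as both conditions are tested on curves contracted by the respective morphism, i.e. curves lying in the common exceptional locus.

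Putting it together, a strong calibration on $X_f$ consists of nef classes $J_{\pm}$ with $J_{\pm}^2 \cdot (E_1 - E_2)$ of opposite signs; extracting the exceptional parts $A_{\pm}$ yields $\pi$-nef classes with the same signed intersections by Lemma~\ref{lem:localblow}(2). Transporting $A_{\pm}$ to $P_f$ produces $\pi_P$-nef classes that, by Lemma~\ref{lem:localblow}(1), extend to nef classes on $P_f$ realizing a strong calibration there; the reverse implication is symmetric. The ``in particular'' statement about plain calibration follows from exactly the same argument with the equality $J^2 \cdot (E_1 - E_2) = 0$ in place of the two strict inequalities. The main obstacle is justifying that intersection numbers of exceptional classes and the $\pi$-nefness cone really are local invariants of $\mathrm{Exc}(\pi)$, so that they match cleanly on the two sides; this is essentially built in by the way $P_f$ is constructed from a base change of the local blowups, and the identification is explicit enough that no subtle global obstruction can arise.
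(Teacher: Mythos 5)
Your proposal is correct and matches the paper's argument in essence: both reduce strong calibration to a condition on $\pi$-nef exceptional classes using the two parts of Lemma~\ref{lem:localblow}, then observe that $\pi$-nefness and the relevant self-intersection numbers depend only on data supported in a neighborhood of $\mathrm{Exc}(\pi)$, which is shared with $P_f$. Your write-up is somewhat more explicit than the paper's (spelling out the direct-sum decomposition of $N^1(X_f)$ and the transport of the relative nef cone), but the underlying reduction is identical.
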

\begin{proof}
It suffices to prove that the assertion is local on $X$. By definition, $E_1$ and $E_2$ can be strongly calibrated if and only if there exist nef divisors $J_1, J_2 \in Nef(X_n)$ such that $J_1^2 \cdot E_1 > J_1^2 \cdot E_2$ and $J_2^2 \cdot E_1 < J_2^2 \cdot E_2$. We may assume $J_1 = \pi^* D_1 + F_1$ for some $D_1 \in N^1(X)$, $F_1 \subset Exc(\pi)$ and $E_1, E_2 \in Exc(\pi)$. Then $J_1$ is nef if and only if $F_1$ is $\pi$-nef by Lemma~\ref{lem:localblow}(1), and the first condition holds if and only if $F_1^2 \cdot E_1 > F_1^2 \cdot E_2 $ by Lemma~\ref{lem:localblow}(2).

Arguing similarly for the second condition, in summary, $E_1$ and $E_2$ can be strongly calibrated if and only if there exists $\pi$-nef divisors $F_i$ such that $F_1^2 \cdot E_1 > F_1^2 \cdot E_2$ and $F_2^2 \cdot E_1 < F_2^2 \cdot E_2$; note also that $F_i \subset Exc(\pi)$. But these divisors are all contained in $U_f$ and we conclude.
\end{proof}

For simplicity, we restrict to the case of toric varieties and the blowup construction in the tree ensemble. Under certain assumptions, the property of non-existence of an ample divisor $J$ such that $J^2 \cdot (D_1 - D_2) = 0$ can also be preserved. 
\begin{Lemma}\label{lem:nonexistence}
Assume $X_n \rightarrow \ldots \rightarrow X_m$ a sequence of blowups establishing a collection of edge trees. Assume that $J^2 \cdot (D_1 - D_2) > 0$ for any ample class $J \in Nef(X_m)$ with $D_1,D_2$ toric divisors on the same face tree. Assume in addition that $D_1$ does not intersect with any base curve of an edge tree. Then any ample class $J' \in Nef(X_n)$ satisfies $J'^2 \cdot (\widetilde{D_1} - \widetilde{D_2}) > 0$ on $X_n$.
\end{Lemma}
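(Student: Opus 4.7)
The plan is to exploit the fact that $\widetilde{D}_1 = \pi^* D_1$ exactly, made possible by the hypothesis that $D_1$ is disjoint from every base curve of the edge trees. First I would decompose any ample $J' \in \text{Nef}(X_n)$ as $J' = \pi^* L + F$, where $L \in N^1(X_m)$ and $F$ lies in the subspace spanned by the proper transforms $\widetilde{E}_i$ of the exceptional divisors of the edge-tree blowups. Arguments analogous to Lemma~\ref{lem:relativeJblow} should identify $L$ with $\pi_* J'$ and establish that $L$ lies in the closure of the ample cone of $X_m$, by testing $J' \cdot \widetilde{C'} \geq 0$ against strict transforms of curves on $X_m$ that avoid the blowup centers.

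The next step is to compute the intersection numbers using the projection formula. Because $D_1$ is disjoint from every base curve of the edge trees, $\pi^* D_1$ is set-theoretically disjoint from the entire exceptional locus $\bigcup_i \widetilde{E}_i$, so by Proposition~\ref{prop:intfacts} every cross-term between $F$ and $\pi^* D_1$ in the expansion of $J'^2$ vanishes, yielding
\begin{equation*}
J'^2 \cdot \widetilde{D}_1 \;=\; (\pi^* L)^2 \cdot \pi^* D_1 \;=\; L^2 \cdot D_1.
\end{equation*}
Writing $\widetilde{D}_2 = \pi^* D_2 - Z$, where $Z$ is an effective exceptional correction contributed by those edge-tree blowups whose centers lie in $D_2$, a parallel expansion gives $J'^2 \cdot \widetilde{D}_2 = L^2 \cdot D_2 + D_2 \cdot \pi_*(F^2) - J'^2 \cdot Z$, and therefore
\begin{equation*}
J'^2 \cdot (\widetilde{D}_1 - \widetilde{D}_2) \;=\; L^2 \cdot (D_1 - D_2) \;-\; D_2 \cdot \pi_*(F^2) \;+\; J'^2 \cdot Z.
\end{equation*}

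The proof then reduces to showing that this sum is strictly positive. The first term is non-negative by continuity of the hypothesis on $X_m$, and strictly positive whenever $L$ is ample. For the exceptional corrections, $J'^2 \cdot Z \geq 0$ since the restriction of the nef class $J'$ to each irreducible component of $Z$ is nef with non-negative self-intersection, while $-D_2 \cdot \pi_*(F^2)$ should come out non-negative once one unpacks the single-blowup prototype $\pi_*(E^2) = -C$ (so that $\pi_*(F^2)$ is supported on base curves with the appropriate negative coefficients) and uses that $F$ is concentrated in the exceptional locus. The main obstacle I anticipate is establishing strictness when $L$ lies on the boundary of the ample cone of $X_m$: in that case one must argue that the ampleness of $J'$ on $X_n$ forces at least one of the exceptional corrections to contribute strictly, which will likely require a case analysis on whether the base curves of the edge trees meet $D_2$, paralleling the combinatorial bookkeeping in Lemmas~\ref{lem:redcurves} and~\ref{lem:edgecurves}.
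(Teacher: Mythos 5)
Your decomposition $J' = \pi^*L + F$ with $L = \pi_*J'$ and $F$ supported on the edge-tree exceptional locus is valid, and the resulting identity
\[
J'^2 \cdot (\widetilde D_1 - \widetilde D_2) \;=\; L^2 \cdot (D_1 - D_2) \;-\; D_2 \cdot \pi_*(F^2) \;+\; J'^2 \cdot Z
\]
together with the sign analysis of the last two terms is sound. The paper, however, reaches the decisive inequality by a different device: it decomposes $J'$ with respect to the full tower $X_n \to X_m \to X$, extracts from the ampleness of $J'$ only that the face-tree component of $L$ is \emph{relatively} ample for $X_m \to X$, and then replaces the pullback-from-$X$ part of $L$ by a sufficiently ample class $A$ on $X$. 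Since $D_1 - D_2$ is supported in $Exc(X_m/X)$, this replacement does not change $L^2\cdot(D_1-D_2)$, and $\pi^*A + (\text{face part})$ is ample on $X_m$, so the hypothesis applies to it directly. This completely sidesteps any need to analyze $\pi_*J'$.

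The gap in your version is precisely the strictness of the first term, which you flag but then propose the wrong repair for. Your suggestion --- forcing one of the exceptional corrections to contribute strictly --- cannot work in general: if $D_2$ is also disjoint from every edge-tree base curve, which the hypothesis permits, then $Z = 0$ and $-D_2 \cdot \pi_*(F^2) = 0$, so both corrections vanish identically and the strictness must come entirely from $L^2\cdot(D_1-D_2)$. The correct repair is to show that $L = \pi_* J'$ is in fact ample on $X_m$, not merely nef. Indeed $-F$ is effective and exceptional by the argument of Lemma~\ref{lem:relativeJblow}, so for any irreducible curve $C$ on $X_m$ with strict transform $\widetilde C$ one has $L \cdot C = \pi^*L \cdot \widetilde C = J' \cdot \widetilde C - F \cdot \widetilde C \geq J'\cdot \widetilde C > 0$ because $\widetilde C \not\subset Exc(\pi)$; in the toric setting positivity on all curves gives ampleness. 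With $L$ ample the hypothesis gives $L^2 \cdot (D_1 - D_2) > 0$ outright. Either this fix or the paper's fresh-ample-class substitution is needed to close your argument; your proposed case analysis on which base curves meet $D_2$ is not sufficient.
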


\section{Statistics in the ensemble}\label{section:statistics}
In this section, we apply the above results to compute the statistics of existence of calibrations within the tree ensemble. We also present an explicit example that demonstrates our Lemmas regarding the existence and non-existence of calibration.
\subsection{Faking Gauge Coupling Unification \\ in the Tree Ensemble}\label{subsection:intersecting}
We will apply corollary~\ref{cor:local} for each face tree in the ensemble
\[
\pi_P \colon P_f \rightarrow P_{f-1} \rightarrow \ldots \rightarrow P_0 \rightarrow \mathbb{P}^3,
\]
where we take $\mathbb{P}^3$ as the initial variety, w.l.o.g. by \ref{cor:local}.
For each face tree, we compute the relative nef cone $Nef(P_f/\mathbb{P}^n)$ consisting of nef divisors on $P_f$ relative to toric curves contracted by $\pi_P$. For each pair of intersecting toric exceptional divisors $E_1,E_2 \in Exc(\pi_P)$, if there exists generators $J_1, J_2 \in Nef(P_f/\mathbb{P}^n)$ such that $J_1^2 \cdot (E_1 - E_2)> 0$ and $J_2^2 \cdot (E_2 - E_1) > 0$, then $E_1$ and $E_2$ can be strongly calibrated on any identical face tree on an arbitrary smooth complex projective variety $X$. 

On the other hand, if $J_i \cdot J_j \cdot (E_1 - E_2) > 0 $ for all $i,j$, then $E_1, E_2$ can never be calibrated. Indeed, by the same argument as in corollary~\ref{cor:local}, it suffices to check that every element $J \in Amp(P_f/\mathbb{P}^3)$ satisfies $J^2 \cdot (E_1- E_2) > 0$ up to switching $E_1$ and $E_2$. But with our assumptions, we may assume $J = a_1 J_1 + \ldots + a_n J_n$ for $J_i$ all the generators of the cone $Nef(P_f/\mathbb{P}^3)$ and $a_i >0$ for all $i$. Then clearly, the assumption of the first sentence implies the claim.

For each tree, we compute the number of intersecting divisors admitting and not admitting a calibration. This yields a probability of calibration and not calibration for the intersecting pairs of divisors in each tree, and we average this over all trees in the ensemble, displaying the results in Table \ref{tab:summary}. Since our conditions are only sufficient for the existence or non-existence of calibration, the probabilities need not add to one.
\begin{table*}[t]
\centering
\begin{tabular}{c|c|c|c|c|c|c}
	\# Simplices & \# Face Trees & Calib Total & No Calib Total & Total Curves& P(Calib) & P(No Calib) \\
	 \hline
	3& 1 &  0 &0 & 3 & 0 & 0  \\

	5  & 6 & 6& 0 &36& 0.1666 & 0 \\
	7  & 33& 66& 0 &297 &0.2222 & 0 \\
	9  & 145& 525& 24 &1740 &0.3017 &0.0138 \\
	11  & 564 & 3126& 162&8460& 0.3695 & 0.0191\\
	13  & 2004& 15342&870 & 36072 & 0.4253 &0.0241 \\
	15  & 6586 &65166 & 3966&  138306 & 0.4712 & 0.0287\\
	17  & 20175 &246120 & 16452&484200 & 0.5083 & 0.0340 \\
	19  & 57729&841062 & 61461 &1558683 & 0.5396 & 0.0394\\
	21  &154069 &2618010 &206655 &4622070 & 0.5664 & 0.0447\\
	23  & 382206& 7438689& 621837&12612798 & 0.5898 & 0.0493 \\
	25  &876186 &19250893 & 1669752&31542696 & 0.6103 & 0.0529\\
	27  & 1839392 & 45096804& 3984672& 71736288& 0.6286 & 0.0555\\
	29  & 3485172 & 94452598&8357028 &146377224 & 0.6453 & 0.0571 \\
	31  & 5820060&173009204 &15185328& 261902700 & 0.6606 &0.0580 \\
	33  &8272334&268091974 & 23100966& 397072032 & 0.6752 & 0.0582\\
	35  & 9449160 &332325497 &  27686928& 481907160& 0.6896& 0.0574\\
	37  &7844748&299082419 &22680936 & 423616392 & 0.7060 &0.0535 \\
	39  & 3663074& 152297768&8237964 &208795218 & 0.7294 & 0.0395\\
	  \end{tabular}
\caption{Second column is the number of all face trees with specified number of simplices. Third and fourth columns are the total number of curves (intersecting pairs of toric divisors) admitting and not admitting a calibration over all face trees of the specified number of simplices respectively. Fifth and sixth columns are obtained by dividing the second and third columns by the fourth column respectively.}
\label{tab:summary}
\end{table*}

\begin{table}[t]
\centering
\begin{tabular}{c|c|c|c|c}
	\# Simplices & \# Edge Trees & Calib & Total Curves &P(Calib)\\
	 \hline

	1& 1 &  0  & 1& 0\\

	2  & 1 & 2 &4 & 0.5000\\
	3  & 2& 10 & 14&0.7142\\
	4  & 5& 40& 50 & 0.8000\\
	5  & 8 & 88& 104& 0.8462\\
	6  & 12& 168 &192 &0.8750\\
	7  & 14 &238 &266&0.8947\\
	8  & 14 & 280& 308 & 0.9090\\
	9  & 12& 276& 300&0.9200\\
	10  & 8 & 208 & 224&0.9286\\
	11  & 4 & 116&124&0.9355\\
	12  & 1 & 32 &34&0.9412\\
	  \end{tabular}
\caption{Table of all possible edge trees. We use the convention that a curve of an edge tree is any curve on top of the base edge, or any curve whose corresponding 1-simplex has endpoint interior to the base edge. The total curves is computed with the formula $(3(n-1)+1)\times N_e(n)$ where $n$ is the number of simplices and $N_e(n)$ is the number of corresponding edge trees.}
\label{tab:summaryedge}
\end{table}

To compute the statistics of a calibration for face trees, we operate with the following ansatz. Consider the set $\mathcal{T}_f$ of all ordered pairs $(T_f, \{D_i,D_j\})$ where $T_f$ is a face tree, and $D_i,D_j$ are two distinct intersecting toric divisors on $T_f$. We compute the following sums
\begin{align}
\begin{split}
E_f(\text{Calib}) &= \frac{1}{|\mathcal{T}_f|}\sum\limits_{t\in\mathcal{T}_f}P_c(t)\\
E_f(\text{No Calib})&= \frac{1}{|\mathcal{T}_f|}\sum\limits_{t\in\mathcal{T}_f}P_{nc}(t)
\end{split}
\end{align}
where $P_c(t) = 1$ if $D_i$ and $D_j$ admit a strong calibration and $0$ otherwise and $P_{nc}(t) = 1$ if $D_i$ and $D_j$ satisfy the assumptions of Lemma~\ref{lem:nonexistence} and $0$ otherwise. Summing over the third, fourth, and fifth columns of table~\ref{tab:summary}, we obtain the results:
\begin{align}\label{eq:face}
\begin{split}
\sum\limits_{t\in\mathcal{T}_f}P_c(t) = \text{Calib Total} &= 1394835269\\
\sum\limits_{t\in\mathcal{T}_f}P_{nc}(t) = \text{No Calib Total} &= 111815001\\
|\mathcal{T}_f| = \text{Total Curves} &= 2042412375
\end{split}
\end{align}
obtaining the results
\begin{align*}
E_f(\text{Calib}) &= 0.6829 \\
E_f(\text{No Calib}) &= 0.0547
\end{align*}
for face trees.

Similarly, we define the set $\mathcal{T}_e$ of all ordered pairs $(T_e,\{D_i,D_j\})$ where $T_e$ is an edge tree and $D_i,D_j$ are two intersecting divisors. Defining everything analogously and summing over the relevant data in table~\ref{tab:summaryedge}, we find
\begin{align}\label{eq:edge}
\begin{split}
\sum\limits_{t\in\mathcal{T}_e}P_c(t) &= 1458\\
|\mathcal{T}_e| &= 1621
\end{split}
\end{align}
for edge trees.

To estimate the statistics across the whole ensemble, we will study a fixed facet $F$ of a $3$d reflexive polytope with $36$ faces and $63$ edges; there are two such facets, which dominate the statistics, and each arises in a K\" ahler threefold with $h^{1,1}=35$. 

Let $\mathcal{T}$ be the set of all ordered pairs $(T,\{D_i,D_j\})$ where $T$ is a pair consisting of a sequence of $36$ face trees and a sequence of $63$ edge trees, and $D_i, D_j$ are two intersecting toric divisors contained in $T$. We compute the sums
\begin{align*}
E(\text{Calib}) &= \frac{1}{|\mathcal{T}|} \sum\limits_{t \in \mathcal{T}} P_c(t) \\
E(\text{No Calib}) &= \frac{1}{|\mathcal{T}|} \sum\limits_{t \in \mathcal{T}} P_{nc}(t) 
\end{align*}

To compute the quantity $|\mathcal{T}|$, we first define $\mathcal{F} \subset \mathcal{T}$ to be the subset of pairs whose intersecting divisors belong to the same face tree. Similarly, we define $\mathcal{E}\subset \mathcal{T}$ to be the subset of pairs with intersecting divisors belong to the same edge tree. We define $\mathcal{F}_f,\mathcal{E}_e$ to be the subsets of $\mathcal{F},\mathcal{E}$ consisting of pairs whose intersecting divisors belong to a face or edge tree lying over a fixed face $f$ or edge $e$ in $F$ respectively. We define $\mathcal{F}_{f, T_f}, \mathcal{E}_{e,T_e}$ the subset of $\mathcal{F}_f,\mathcal{E}_e$ with a fixed face tree $T_f$ or edge tree $T_e$ lying over $f$ or $e$ respectively.
Clearly, we have the equalities:
\begin{align*}
|\mathcal{T}| &= |\mathcal{F}| + |\mathcal{E}| \\ 
&= \sum\limits_f |\mathcal{F}_f| + \sum\limits_e |\mathcal{E}_e| \\
&= \sum\limits_{f,T_f} |\mathcal{F}_{f,T_f}| + \sum\limits_{e,T_e} |\mathcal{E}_{e,T_e} |
\end{align*}

We recall the total number of face and edge trees:
\[
N_f = 41873644, \quad N_e = 82
\]

To compute $|\mathcal{F}_{f,T_f}|$, we observe that for a fixed face $f$ on $F$, there are $N_f^{35} \times N_e^{63}$ configurations in $\mathcal{T}$ with a face tree $T_f$ on $f$. In particular, if $T_f$ contains $n$ simplices, then we have 
\[
|\mathcal{F}_{f,T_f}| = \frac{3 (n-1)}{2} \times N_f^{35} \times N_e^{63}
\]
where $3(n-1)/2$ is the number of toric curves added in establishing the face tree.
Thus, we find
\[
|\mathcal{F}| = 36 \times |\mathcal{F}_f| = 36 \times \sum\limits_{T_f} \frac{3(n-1)}{2} \times N_f^{35} \times N_e^{63}
\]
where the quantity 
\[\sum\limits_{T_f} \frac{3(n-1)}{2}=|\mathcal{T}_f|
\]
the total number of curves as in equations~\ref{eq:face}.

To compute $|\mathcal{E}_{e,T_e}|$, we observe that this decomposes into sums over edges lying on an edge of the facet $F$ and edges interior to $F$. Assume that $T_e$ is interior to $F$. Then arguing similarly as above, we find
\[
|\mathcal{E}_{e,T_e}| =  (3(n-1) + 1) \times N_f^{36} \times N_e^{62}
\]
where $3(n-1)+1$ is the number of toric curves added in establishing the edge tree, and $n$ is the number of curve blowups performed.
Similarly, if $T_e$ was on the edge of $F$, then we have
\[
|\mathcal{E}_{e,T_e}| =  2(n-1)+1  \times N_f^{36} \times N_e^{62}
\]
Taking everything together, we find
\begin{align*}
|\mathcal{E}| &= 16 \times \sum\limits_{T_e} (2(n-1)+1)  \times N_f^{36} \times N_e^{62} \\
 &+ 47 \times  \sum\limits_{T_e} (3(n-1) + 1) \times N_f^{36} \times N_e^{62}
\end{align*}
since there are $47 (16)$ choices for $T_e$ that are in the interior (on the edge)  of $F$.

A completely analogous argument demonstrates that the sum $\sum\limits_{t \in \mathcal{T}} P_c(t)$ decomposes as:
\begin{align*}
\sum\limits_{t \in \mathcal{T}} P_c(t) &= \sum\limits_{t \in \mathcal{F}} P_c(t) +  \sum\limits_{t \in \mathcal{E}} P_c(t)\\
&= \sum\limits_{f}\sum\limits_{t \in \mathcal{F}_f} P_c(t) +\sum\limits_{e} \sum\limits_{t \in \mathcal{E}_e} P_c(t) \\
&= \sum\limits_{f,T_f}\sum\limits_{t \in \mathcal{F}_{f,T_f}}P_c(t) +\sum\limits_{e,T_e} \sum\limits_{t \in \mathcal{E}_{e,T_e}} P_c(t)
\end{align*}
For each fixed $f, T_f$ and $e,T_e$, the sums are simply
\begin{align*}
\sum\limits_{t \in \mathcal{F}_{f,T_f}} P_c(t) &= \text{Calib}(T_f)\times N_f^{35} \times N_e^{63} \\
\sum\limits_{t \in \mathcal{E}_{e,T_e}} P_c(t) &= \text{Calib}(T_e)\times N_f^{36} \times N_e^{62} 
\end{align*}
where $\text{Calib}(T)$ is the total number of curves admitting a calibration on a face or edge tree $T$. Summing over all face and edge trees, we find
\begin{align*}
\sum\limits_{T_f}\sum\limits_{t \in \mathcal{F}_{f,T_f}} P_c(t)  &= \sum\limits_{t \in \mathcal{T}_f} P_c(t) \times N_f^{35} \times N_e^{63}\\
\sum\limits_{T_e}\sum\limits_{t \in \mathcal{E}_{e,T_e}} P_c(t) &= \sum\limits_{t \in \mathcal{T}_e} P_c(t) \times N_f^{36} \times N_e^{62}
\end{align*}
where the notation for the first term in the rhs of both expressions are given as in~\ref{eq:face} and \ref{eq:edge}, i.e. is simply the total number of curves admitting a calibration over all face or edge trees. Performing the sum over all $f, e$, we find
\[\sum\limits_{t \in \mathcal{F}} P_c(t) = 36 \times \sum\limits_{t\in T_f}P_c(t) \times N_f^{35} \times N_e^{63}
\]
and
\begin{align*}
\sum\limits_{t \in \mathcal{E}} P_c(t) &= 16 \times (\sum\limits_{t\in T_e}P_c(t) - N_e)\times N_f^{36} \times N_e^{62}\\
&+ 47 \times \sum\limits_{t\in T_e}P_c(t) \times N_f^{36} \times N_e^{62}
\end{align*}

Carrying out the same exercise for curves not admitting a calibration implies a completely analogous expression. Finally, after taking everything together, we find the following:
\begin{align*}
E(\text{Calib}) &= 0.7712\\
E(\text{No Calib}) &= 0.0322.
\end{align*}

\subsection{Example}\label{subsection:example}
We explore a simple example for the purposes of explicitly demonstrating the existence and non-existence of calibrations, and also highlighting the technique used to arrive at the statistics in the above sections. 
Consider the following sequence of blowups on a facet of the polytope of $\mathbb{P}^3$ and denote the last variety with $X$.
\begin{center}
\begin{tikzpicture}[scale=1.8, every node/.style={scale=.9}]
\draw[thick,color=Black] (90:.75) -- (90+120:.75) -- (90+120+120:.75) -- cycle;
\draw[thick,dash pattern={on 1pt off 1pt},color=ForestGreen] (90:.75) -- (0,0);
\draw[thick,dash pattern={on 1pt off 1pt},color=ForestGreen] (90+120:.75) -- (0,0);
\draw[thick,dash pattern={on 1pt off 1pt},color=ForestGreen] (90+240:.75) -- (0,0);
\fill (90:.75) circle (.5mm);
\fill (90+120:.75) circle (.5mm);
\fill (90+240:.75) circle (.5mm);
\fill (0,0) circle (.5mm);
\node at (90:.95) {$\widetilde{D_0}$}; \node at (90+135:.8) {$\widetilde{D_2}$}; \node at (90+225:.8) {$\widetilde{D_1}$};
\node at (0,-.2) {$F_0$};
\draw[thick,<-] (.6,.1) -- (.9,.1);
\begin{scope}[xshift=1.55cm]
\draw[thick,color=Black] (90:.75) -- (90+120:.75) -- (90+120+120:.75) -- cycle;
\draw[thick,dash pattern={on 1pt off 1pt},color=ForestGreen] (90:.75) -- (0,0);
\draw[thick,dash pattern={on 1pt off 1pt},color=ForestGreen] (90+120:.75) -- (0,0);
\draw[thick,dash pattern={on 1pt off 1pt},color=ForestGreen] (90+240:.75) -- (0,0);
\draw[thick,dash pattern={on 1pt off 1pt},color=ForestGreen] (90+240:.75) -- (90:.325);
\draw[thick,dash pattern={on 1pt off 1pt},color=ForestGreen] (90+120:.75) -- (90:.325);
\fill (90:.75) circle (.5mm);
\fill (90:.325) circle (.5mm);
\fill (90+120:.75) circle (.5mm);
\fill (90+240:.75) circle (.5mm);
\fill (0,0) circle (.5mm);
\node at (120:.3) {$F_1$};
\node at (0,-.2) {$\widetilde{F_0}$};
\node at (90:.95) {$\widetilde{D_0}$}; \node at (90+135:.8) {$\widetilde{D_2}$}; \node at (90+225:.8) {$\widetilde{D_1}$};
\end{scope}
\draw[thick,<-] (2.2,.1) -- (2.5,.1);
\begin{scope}[xshift=3.1cm]
\draw[thick,color=Black] (90:.75) -- (90+120:.75) -- (90+120+120:.75) -- cycle;
\draw[thick,dash pattern={on 1pt off 1pt},color=ForestGreen] (90:.75) -- (0,0);
\draw[thick,dash pattern={on 1pt off 1pt},color=ForestGreen] (90+120:.75) -- (0,0);
\draw[thick,dash pattern={on 1pt off 1pt},color=ForestGreen] (90+240:.75) -- (0,0);
\draw[thick,dash pattern={on 1pt off 1pt},color=ForestGreen] (90+240:.75) -- (90:.325);
\draw[thick,dash pattern={on 1pt off 1pt},color=ForestGreen] (90+120:.75) -- (90:.325);
\draw[thick,dash pattern={on 1pt off 1pt},color=ForestGreen] (90+240:.325) -- (90:.325);
\draw[thick,dash pattern={on 1pt off 1pt},color=ForestGreen] (90+120:.75) -- (90+240:.325);
\fill (90:.75) circle (.5mm);
\fill (90:.325) circle (.5mm);
\fill (90+120:.75) circle (.5mm);
\fill (90+240:.75) circle (.5mm);
\fill (0,0) circle (.5mm);
\fill (90+240:.325) circle (.5mm);
\node at (120:.3) {$\widetilde{F_1}$};
\node at (0,-.2) {$\widetilde{F_0}$};
\node at (8:.27) {$F_2$};
\node at (90:.95) {$\widetilde{D_0}$}; \node at (90+135:.8) {$\widetilde{D_2}$}; \node at (90+225:.8) {$\widetilde{D_1}$};
\end{scope}
\end{tikzpicture}
\end{center}
The relative Mori cone $NE(X/\mathbb{P}^3)$ is generated by the toric curves labeled with green dashes in the above diagram. The intersection matrix between these curves and the divisors in the above diagram are given by the following.
\[\makeatletter\setlength\BA@colsep{5.4pt}\makeatother
\begin{blockarray}{cccccccccc}
 & C_0 & C_1 & C_2 & C_3 & C_4 & C_5& C_6& C_7 & C_8\\
    \begin{block}{c(ccccccccc)}
  \widetilde{D_0} & 0 & 1 & 1 & 2 & 0 & 0 & 0 & 0 & 0 \\
  \widetilde{D_1} & 0 & -1 & 0 & 1 & 0 & 1 & 1 & 3 & 0 \\
  \widetilde{D_2} & 1 & 0 & 0 & 1 & 1 & 0 & 0 & 1 & 1 \\
  \widetilde{F_0} & -3 & 0 & 1 & 0 & -3 & 1 & 1 & 0 & -3 \\
  \widetilde{F_1} & 1  & -1 & -1 & -1 & 1 & 0 & 0 & 1 & 1 \\
   F_2 & 1 & 1 & 0 & 0 & 1 & -1 & -1 & -2 & 1 \\
\end{block}
\end{blockarray}
\]
where the toric curves are given by the following
\begin{align*}
\{C_0, C_1, C_2, C_3, C_4, C_5, C_6, C_7, C_8 \} =\\ \{\widetilde{D_2} \cdot \widetilde{F_0}, \widetilde{D_1} \cdot \widetilde{F_1}, \widetilde{D_2} \cdot \widetilde{F_1}, \widetilde{D_0} \cdot \widetilde{F_1}, \\ \widetilde{F_0} \cdot \widetilde{F_1}, \widetilde{D_2} \cdot F_2, F_2 \cdot \widetilde{F_1}, \widetilde{D_1} \cdot F_2, \widetilde{F_0} \cdot F_2 \}
\end{align*}
The generators of the relative Nef cone $Nef(X/\mathbb{P}^3)$ are then given by dualizing the cone generated by the above rays, and we obtain the following set of generators.
\begin{align*}
&\{J_0, J_1, J_2 \} = \\
&\{-\widetilde{F_0} - 2\widetilde{F_1} - F_2, -\widetilde{F_0} - \widetilde{F_1} - F_2, -2\widetilde{F_0} - 3\widetilde{F_1} -3F_2\}
\end{align*}
For each face leaf $\widetilde{F_a}$, we compute the matrix 
\[Q^a_{ij} = J_i \cdot J_j \cdot \widetilde{F_a}
\]
which yields the following matrices.
\[
Q^0 = \begin{pmatrix}
0 & 0 & 0\\ 
0 & 1 & 0\\
0 & 0 & 0
\end{pmatrix},
\quad
Q^1 = \begin{pmatrix}
2 & 1 & 3\\ 
1 & 0 & 1\\
3 & 1 & 3
\end{pmatrix},
\quad
Q^2 = \begin{pmatrix}
0 & 0 & 0\\ 
0 & 0 & 1\\
0 & 1 & 3
\end{pmatrix}
\]
Thus, we see that the divisors $\widetilde{F_1}, F_2$ can never be calibrated as the matrix $Q^1 - Q^2$ has all non-negative entries. On the other hand, the pairs $\widetilde{F_0}, \widetilde{F_1}$ and $\widetilde{F_0}, F_2$ admit calibrations as the matrices $Q^1 - Q^0$ and $Q^2 - Q^0$ admit both positive and negative entries along their diagonal. For $Q^2 - Q^0$, the equal volume locus is 
\begin{equation}
    J_2 = -\frac{J_1}{3}+\frac{\sqrt{3J_0^2 + 4J_1^2}}{3}
\end{equation}
with $J_1>0$ and $J_2>0$.
For $Q^1-Q^0$ it is
\begin{equation}
J_2 = - \frac{3J_0+J_1}{3}+\frac{\sqrt{2}}{3}\sqrt{3J_0^2 + 2J_1^2}
\end{equation}
if $J_0 > 0$ and $J_1 > (1+\sqrt{2})J_0$.

\noindent {\bf Acknowledgements.}
We thank Fabian Ruehle for discussions. We thank the Northeastern RC team and Keegan Stoner for assistance with the Discovery cluster.  B.S. thanks the 2021 Simons summer workshop at the Simons Center for Geometry and Physics for hospitality during the completion of this work. 
J.H. is supported by NSF CAREER grant PHY-1848089. B.S. is supported by the NSF Graduate Research Fellowship under grant DGE-1451070.

\appendix
\section{Some intersection theory \label{app:intersections}}
Let $N^1(X)$ and $N_1(X)$ be the group of $\mathbb{R}$-divisors and $1$-cycles modulo numerical equivalence respectively. Recall that there exists a perfect, bilinear pairing induced by the intersection product $N^1(X) \times N_1(X) \rightarrow \mathbb{R}$ and hence these vector spaces are canonically dual. Let $\text{Nef}(X) \subset N^1(X)$ and $\overline{\text{NE}}(X) \subset N_1(X)$ be the nef and Mori cones respectively. For simplicity, we will assume that $\text{Nef}(X)$ is finitely generated.

Let $\pi \colon Y \rightarrow X$ be a morphism of projective varieties. A divisor $D \in N^1(Y/X)$ is $\pi$-nef (resp. $\pi$-ample) if $D\cdot C \geq 0$ (resp. $D\cdot C > 0$) for all curves $C$ contracted by $\pi$. We will denote by $NE(Y/X) \subset NE(Y)$ the subcone of curves contracted by $\pi$ and by $Nef(Y/X) \subset N^1(Y/X)$ the cone of $\pi$-nef divisors.

The utility of these definitions is the following basic statement.

\begin{Lemma}[\cite{kollar_mori_1998,54125}]\label{lem:relativeJ}
Assume that $D \in Nef(Y/X)$. Assume in addition either that $-D$ is effective on $Y$ or $D$ is $\pi$-ample. Then there exists an ample class $A \in Nef(X)$ such that $\pi^*A + D \in Nef(Y)$.
\end{Lemma}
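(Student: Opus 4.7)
The plan is to verify the conclusion from the definition: $\pi^*A + D$ is nef on $Y$ iff $(\pi^*A + D) \cdot C \geq 0$ for every irreducible curve $C \subset Y$. I would split curves according to whether $\pi$ contracts them. For any $C$ with $\pi_*C = 0$, the projection formula gives $\pi^*A \cdot C = A \cdot \pi_*C = 0$, and the hypothesis $D \in \mathrm{Nef}(Y/X)$ yields $D \cdot C \geq 0$, so the desired inequality is automatic and independent of the choice of $A$. The work of the lemma is therefore to arrange $A \cdot \pi_*C + D \cdot C \geq 0$ uniformly over those $C$ for which $\pi_*C$ is a nonzero effective class in $\overline{\mathrm{NE}}(X)$.

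In the case that $D$ is $\pi$-ample, I would invoke the standard consequence of relative Serre vanishing: some integer multiple $mD$ is $\pi$-very ample, furnishing a closed embedding $Y \hookrightarrow \mathbb{P}_X(\pi_*\mathcal{O}_Y(mD))$ over $X$. The tautological line bundle on the projective bundle becomes very ample after twisting by the pullback of a sufficiently positive class $A'$ from $X$, so restricting to $Y$ shows that $mD + \pi^*A'$ is ample on $Y$. Working in $N^1(X)_{\mathbb{Q}}$ and clearing denominators yields an ample $A$ on $X$ with $\pi^*A + D$ ample, hence nef.

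In the case that $-D = E$ is effective, the condition $D \in \mathrm{Nef}(Y/X)$ becomes $E \cdot C \leq 0$ for every $\pi$-contracted $C$. By the negativity lemma from Koll\'ar--Mori, this forces $\mathrm{supp}(E)$ to lie in the $\pi$-exceptional locus, so $D$ is numerically trivial on all non-exceptional curves. To conclude, I would work on the compact slice $\Sigma = \{\alpha \in \overline{\mathrm{NE}}(Y) : H \cdot \alpha = 1\}$ for a fixed ample $H$ on $Y$. The continuous functional $\alpha \mapsto A \cdot \pi_*\alpha$ vanishes exactly on the closed face $\Sigma \cap \overline{\mathrm{NE}}(Y/X)$ (since $A$ is ample on $X$ and $\pi_*\alpha \in \overline{\mathrm{NE}}(X)$) and is strictly positive elsewhere, while the continuous functional $\alpha \mapsto D \cdot \alpha$ is nonnegative on that same face. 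Using that the face is closed and the slice compact, a single scalar $t > 0$ suffices to make $t\,A \cdot \pi_*\alpha + D \cdot \alpha \geq 0$ throughout $\Sigma$; replacing $A$ by $tA$ gives the desired class.

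The main obstacle lies in the second case: a priori one worries that $A \cdot \pi_*\alpha \to 0$ while $-D \cdot \alpha$ remains bounded away from zero as $\alpha$ approaches $\overline{\mathrm{NE}}(Y/X)$ along $\Sigma$. The negativity lemma is exactly the input that rules this out, since it forces $D \cdot \alpha \geq 0$ precisely on the face where $A \cdot \pi_*\alpha$ vanishes. This compatibility of vanishing loci is the crux of the argument, and it is what distinguishes the present case from the $\pi$-ample case, which is handled more directly by the projective bundle embedding.
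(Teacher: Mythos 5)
The paper states Lemma~\ref{lem:relativeJ} as a citation to \cite{kollar_mori_1998,54125} and gives no in-paper proof, so there is nothing internal to compare against; I will instead assess your argument on its own merits. Your treatment of the $\pi$-ample branch is fine: relative Serre vanishing gives a $\pi$-very-ample multiple, the resulting closed embedding into $\mathbb{P}_X(\pi_*\mathcal{O}_Y(mD))$ over $X$ reduces to the projective-bundle case, and clearing denominators does the rest.

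The effective branch contains two real gaps. First, the negativity lemma does not assert what you use it for. It says (for a proper birational $h$ with $-B$ $h$-nef) that $B$ is effective iff $h_*B$ is, and that an effective $B$ meets each fiber either entirely or not at all; it does \emph{not} force $\mathrm{supp}(E)\subset\mathrm{Exc}(\pi)$. Already $\pi=\mathrm{id}$ with $E$ any nonzero effective divisor is a counterexample ($-E$ is vacuously $\pi$-nef while $\mathrm{Exc}(\pi)=\emptyset$), and more generally $E$ may have components mapping onto divisors in $X$; the paper's $\pi$ is also not assumed birational. Your claim that $D$ is numerically trivial on all non-exceptional curves therefore has no basis. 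Second, the final compactness step does not close. Even granting that $g=D\cdot(-)$ is nonnegative precisely on the face $\Sigma\cap\overline{\mathrm{NE}}(Y/X)$ where $f=A\cdot\pi_*(-)$ vanishes, this alone does not produce a finite $t$ with $tf+g\ge 0$ on $\Sigma$: with $f,g$ linear but $\Sigma$ non-polyhedral one can have $f\to 0$ while $-g/f\to\infty$ along $\partial\Sigma$ (picture a round slice with $f$ vanishing at a single boundary point where $g$ also vanishes to first order). Polyhedrality of $\overline{\mathrm{NE}}(Y)$ is what makes the infimum a minimum over finitely many rays; the paper's applications are toric, so this is available there, but your argument never invokes it. The worry you explicitly flag --- $f\to 0$ with $-g$ bounded away from zero --- cannot actually occur by continuity; the real failure mode is the ratio blowing up, and neither the negativity lemma nor closedness of the face rules it out.
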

\begin{Lemma}\label{lem:relativeJblow}
Assume that $\pi \colon Y \rightarrow X$ is a composition of smooth blowups. Then any divisor $D \in Nef(Y/X)$ satisfies the condition that $-D$ is effective.
\end{Lemma}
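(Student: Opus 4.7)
The plan is to establish this by induction on the number $n$ of smooth blowups in the factorization $\pi = \pi_n \circ \cdots \circ \pi_1$. In spirit, this is the standard negativity lemma of birational geometry (see \cite{kollar_mori_1998}) specialized to a composition of smooth blowups, where $N^1(Y/X)$ may be identified with the span of (total transforms of) the exceptional divisors, so that every class has a representative supported on the exceptional locus.

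For the base case $n=1$, I take $\pi\colon Y\to X$ to be a single smooth blowup along a smooth center $Z$, with exceptional divisor $E$; then $N^1(Y/X)$ is one-dimensional and spanned by $E$, so $D = aE$ for some $a\in\mathbb{R}$. Testing against a line $\ell$ in a fiber of the projective bundle $E\to Z$ gives $E\cdot \ell = -1$, and the $\pi$-nef hypothesis forces $-a = D\cdot \ell \geq 0$; hence $a\leq 0$ and $-D = |a|E$ is effective.

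For the inductive step, I factor $\pi = \pi'\circ\sigma$, where $\sigma\colon Y\to Y'$ is the final smooth blowup with exceptional divisor $E$ and center $Z\subset Y'$, and $\pi'\colon Y'\to X$ is a composition of $n-1$ smooth blowups. Every $D \in N^1(Y/X)$ decomposes as $D = \sigma^*D' + aE$ with $D'\in N^1(Y'/X)$. Repeating the base-case argument on a fiber line $\ell$ of $\sigma$, on which $\sigma^*D'\cdot \ell = 0$, yields $a\leq 0$. Next I argue that $D'$ is $\pi'$-nef: for a $\pi'$-contracted curve $C\not\subset Z$, its proper transform $\widetilde{C}$ is $\pi$-contracted with $\sigma_*\widetilde{C} = C$ and $E\cdot \widetilde{C}\geq 0$ (as $\widetilde{C}\not\subset E$), so the projection formula gives
$$0 \leq D\cdot \widetilde{C} = D'\cdot C + a\,(E\cdot \widetilde{C}),$$
and since $a\leq 0$ with $E\cdot \widetilde{C}\geq 0$, the second summand is non-positive, forcing $D'\cdot C \geq 0$. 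By the inductive hypothesis, $-D'$ is effective on $Y'$, and then $-D = \sigma^*(-D') + (-a)E$ is a sum of effective divisors (pullbacks of effective divisors are effective, and $-a\geq 0$).

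The main technical obstacle is handling $\pi'$-contracted curves $C$ that lie entirely in the smooth center $Z$ of $\sigma$, since the proper-transform construction above breaks down in this case. I expect to resolve this either by using that $Z$ has codimension at least two in $Y'$ and replacing $C$ with a numerically equivalent curve meeting $Z$ properly, or by working directly with a section $\gamma$ of the projective bundle $\sigma^{-1}(C)\to C$ and controlling $E\cdot \gamma$ through the choice of section. This point is routine but is the only step where one cannot simply apply the projection formula and conclude.
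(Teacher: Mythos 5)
Your proposal takes a genuinely different route from the paper, and the gap you flag at the end is real and is in fact the crux, not a routine technicality. The paper's proof is short precisely because it outsources the hard part to \cite{ampcone}: by \cite[Corollary~III.15]{ampcone} one may represent any class in $N^1(Y/X)$ as $D = \sum_i a_i \widetilde E_i$ with $\widetilde E_i$ strict transforms of the exceptional divisors, and by \cite[Proposition~IV.3]{ampcone} there exist rational curves $C_1,\dots,C_n \in N_1(Y/X)$ and \emph{non-positive} integers $d_{ij}$ with $\widetilde E_i \cdot \bigl(\sum_k d_{kj}C_k\bigr) = \delta_{ij}$. Pairing the $\pi$-nef $D$ against $-\sum_k d_{ki}C_k \in NE(Y/X)$ immediately yields $-a_i \ge 0$. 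The non-trivial content is this explicit ``inverse Gram matrix'' of curve classes with coefficients of controlled sign; it is exactly the piece of intersection-theoretic control that your induction cannot reconstruct on its own.

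Neither of your two suggested fixes for the case $C \subset Z$ works in general. First, you cannot replace $C$ by a numerically equivalent curve meeting $Z$ properly, because you need the replacement to lie in $NE(Y'/X)$, and there is no reason such a representative exists (indeed $Z$ may itself be an irreducible component of a fiber of $\pi'$). Second, controlling $E\cdot\gamma$ by choice of (multi)section of $\sigma^{-1}(C)\to C$ can fail outright. For example, if $C\cong\mathbb P^1$ and $N_{Z/Y'}|_C \cong \mathcal O(-1)^{\oplus(c-1)}$, then $\sigma^{-1}(C)\cong C\times\mathbb P^{c-2}$ and one computes that $\mathcal O_E(-1)|_{\sigma^{-1}(C)}$ has bidegree $(-1,-1)$, so for \emph{every} effective curve $\gamma\subset \sigma^{-1}(C)$ with $\sigma_*\gamma = mC$, $m\ge 1$, one has $E\cdot\gamma = -m - (\text{nonneg}) < 0$. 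In that situation the inequality $0 \le D\cdot\gamma = D'\cdot\sigma_*\gamma + a(E\cdot\gamma)$, with $a\le 0$ and $E\cdot\gamma<0$, gives no lower bound on $D'\cdot C$, and the induction does not close. You are right that the statement is ``in spirit the negativity lemma''; but the standard proof of the negativity lemma (\cite{kollar_mori_1998}, Lemma~3.39) proceeds by cutting down to surfaces and using the Hodge index theorem, not by induction on the factorization into blowups. If you want to avoid the \cite{ampcone} input, invoking that lemma directly --- together with the representability of classes in $N^1(Y/X)$ by exceptional divisors, so that $\pi_*(-D)=0$ is effective --- is the cleanest repair.
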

\begin{proof}
By \cite[Corollary III.15]{ampcone}, we may assume that $D = a_1\widetilde{E}_1 + \ldots + a_n\widetilde{E}_n$ with $\widetilde{E}_i$ strict transforms of exceptional divisors contained in $Exc(\pi)$. 

It suffices to prove that $a_i \leq 0$ for all $i$. Assume that there exists an $i$ such that $a_i >0$. By \cite[Proposition IV.3]{ampcone}, there exists a sequence of rational curves $C_1, \ldots, C_n \in N_1(Y/X)$ and non-positive integer coefficients $d_{ij}$ such that for all $1 \leq i,j \leq n$, we have
\[
\widetilde{E_i} \cdot (d_{1j} C_1 + \ldots + d_{nj}C_n) = \delta_{ij}
\]
But then we have
\[
D \cdot -(d_{1i}C_1 + \ldots + d_{ni}C_n) = -a_i
\]
which violates the assumption that $D$ is $\pi$-nef since the class $-(d_{1i}C_1 + \ldots + d_{ni}C_n) \in NE(Y/X)$ as all coefficients are non-negative.
\end{proof}

\begin{prop}[\cite{eisenbud_harris_2016}, Theorem 1.23]\label{prop:intfacts}In the following $A^*(X)$ will denote the Chow ring of $X$.
\begin{enumerate}
\item
There exists a uniquely defined intersection product $\cdot \colon A^*(X) \times A^*(X) \rightarrow \mathbb{Z}$ which is trivial on non-complementary dimensions.
\item
The pullback $\pi^* \colon A^c(X) \rightarrow A^c(Y)$ extends to a ring homomorphism on the respective Chow rings.
\item
The pushforward $\pi_* \colon A^*(Y) \rightarrow A^*(X)$ satisfies the following relation:
\[
\pi_*(\pi^*\alpha \cdot \beta) = \alpha \cdot \pi_* \beta \in A_{l-k}(X)
\]
for $\alpha \in A^k(X)$ and $\beta \in A_l(Y)$.
\end{enumerate}
\end{prop}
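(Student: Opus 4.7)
The plan is to reproduce the construction underlying Eisenbud--Harris, Theorem 1.23 (or equivalently Chapters 1, 6 of Fulton). I treat the three assertions in order; (2) and (3) will both flow from the construction built for (1).

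For (1), I would first define the intersection product on subvarieties meeting \emph{properly}, i.e. with codimensions adding up to the codimension of each component of the intersection. On such a pair $V, W \subset X$, set
\[
[V] \cdot [W] \;=\; \sum_i m_i \,[Z_i],
\]
where the $Z_i$ are the irreducible components of $V \cap W$ and $m_i$ is the Serre intersection multiplicity $\sum_j (-1)^j \operatorname{length}_{\mathcal{O}_{X,Z_i}} \operatorname{Tor}^{\mathcal{O}_X}_j(\mathcal{O}_V,\mathcal{O}_W)$. To extend this bilinear pairing to all of $A^*(X) \times A^*(X)$, invoke the moving lemma for smooth quasi-projective varieties: every pair of rational equivalence classes admits representatives meeting properly, and the resulting class is independent of the chosen representatives. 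Vanishing on non-complementary dimensions is then automatic from the proper-intersection condition. The main technical obstacle of the whole proposition lies here, in proving the moving lemma together with well-definedness modulo rational equivalence; this is the classical difficulty and the rest of the proposition is downstream from it.

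For (2), it suffices by bilinearity and moving to check $\pi^*([V] \cdot [W]) = \pi^*[V] \cdot \pi^*[W]$ when $V, W$ meet properly in $X$ and when their preimages meet properly in $Y$. The cleanest route is to realise the intersection product as Gysin pullback along the diagonal: with $\Delta_X \colon X \hookrightarrow X \times X$ the regular embedding of the diagonal (using smoothness of $X$ essentially), one has $\alpha \cdot \beta = \Delta_X^*(\alpha \times \beta)$. Then $\pi^*(\alpha \cdot \beta) = \pi^*\alpha \cdot \pi^*\beta$ is reduced to the base-change/compatibility square
\[
\Delta_Y^* \circ (\pi \times \pi)^* \;=\; \pi^* \circ \Delta_X^* ,
\]
which is the standard functoriality of refined Gysin pullback. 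The ring homomorphism property then follows because $\pi^*$ is already known to be an additive homomorphism on Chow groups compatible with gradings.

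For (3), the projection formula is established first on cycle representatives: take $\alpha = [V]$ with $V \subset X$ irreducible of codimension $k$ and $\beta = [W]$ with $W \subset Y$ irreducible of dimension $l$, chosen so that $W$ meets $\pi^{-1}(V)$ properly (again by moving). Both sides of $\pi_*(\pi^*[V] \cdot [W]) = [V] \cdot \pi_*[W]$ are then supported on $V \cap \pi(W)$, and equality of multiplicities on each component reduces to a generic-degree computation along the proper map $W \to \pi(W)$, which works because properness guarantees finiteness of $\pi_*$ on the zero-dimensional generic fibre over each component. Extension to arbitrary classes is by linearity combined with the moving lemma from (1). I expect no essential new difficulty here beyond careful bookkeeping of dimensions; the only conceptual input beyond (1)--(2) is the push-pull adjunction for cycle pushforward along a proper map, which is built into the very definition of $\pi_*$ on the Chow ring.
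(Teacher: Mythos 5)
The paper states Proposition~\ref{prop:intfacts} purely as a citation to Eisenbud--Harris, Theorem~1.23 and supplies no proof of its own, so there is no in-paper argument to compare against. Your sketch correctly reproduces the standard construction underlying the cited reference: defining the product via Serre's $\operatorname{Tor}$-formula on properly intersecting cycles in a smooth quasi-projective variety, invoking the moving lemma (which is indeed the genuine technical crux and carries the burden of well-definedness and uniqueness), recasting the product as the refined Gysin pullback $\alpha\cdot\beta=\Delta_X^*(\alpha\times\beta)$ so that ring functoriality of $\pi^*$ reduces to compatibility of Gysin maps in a Cartesian square, and verifying the projection formula on cycle representatives via a generic-degree computation for the proper map $\pi$. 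The only caveat worth stating explicitly is that parts (2) and (3) tacitly require $Y$ (as well as $X$) smooth for the Chow ring structure to exist, and $\pi$ proper for $\pi_*$ to be defined; you use these but it would be cleaner to list them as hypotheses up front, as the cited theorem does.
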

\begin{corollary}\label{cor:intfacts}
Under the above assumptions, the pullback of nef classes is nef.
\end{corollary}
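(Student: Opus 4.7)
The plan is to verify nefness of $\pi^* D$ by pairing it with every irreducible curve on $Y$ and invoking the projection formula from Proposition~\ref{prop:intfacts}(3). Concretely, I fix an irreducible curve $C' \subset Y$; it suffices to show $\pi^* D \cdot C' \geq 0$, since this extends linearly to all classes in $\overline{\text{NE}}(Y)$, which is generated by irreducible curves.

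Applying the projection formula with $\alpha = D \in A^1(X)$ and $\beta = C' \in A_1(Y)$ yields
\[
\pi_*(\pi^* D \cdot C') \;=\; D \cdot \pi_* C' \;\in\; A_0(X).
\]
Since proper pushforward of zero-cycles preserves degree, the two sides have the same integer degree, and so $\pi^* D \cdot C' = D \cdot \pi_* C'$ numerically.

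I would then split into two cases depending on whether $\pi$ contracts $C'$. If it does, the image $\pi(C')$ is a point and $\pi_* C' = 0$, so the intersection vanishes. Otherwise $\pi(C')$ is a curve on $X$ and $\pi_* C' = \deg(\pi|_{C'}) \cdot \pi(C')$ is a positive multiple of an irreducible curve, hence lies in $\overline{\text{NE}}(X)$; nefness of $D$ on $X$ then gives $D \cdot \pi_* C' \geq 0$. In either case $\pi^* D \cdot C' \geq 0$, and extending by linearity over the Mori cone completes the argument.

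I do not anticipate a genuine obstacle here, since the only ingredients needed are the projection formula already recorded in Proposition~\ref{prop:intfacts} and the elementary fact that proper pushforward sends effective $1$-cycles to effective $1$-cycles. The sole care required is the bookkeeping about where the intersection numbers live, which the projection formula handles automatically.
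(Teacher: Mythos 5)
Your argument is correct and is the standard one the paper intends: the corollary follows immediately from the projection formula in Proposition~\ref{prop:intfacts}(3), by pairing $\pi^*D$ against an irreducible curve $C'$ on $Y$ and noting $\pi_*C'$ is effective (or zero). The paper offers no explicit proof beyond labeling this a corollary, and your reasoning fills in exactly the intended steps.
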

In the specific construction of interest, we record the following facts of the relevant intersection theory.
\begin{Lemma}\label{lem:intersectionscurve}
Assume in particular that $X,Y$ are threefolds and that $\pi$ is a blowup of a smooth curve $C \subset X$. Then we have the following relations:
\begin{enumerate}
\item
$\pi^* D_1 \cdot \pi^* D_2 \cdot E = 0$ for any divisors $D_1,D_2 \subset X$.
\item
$\pi^* D \cdot E^2 = - D \cdot C$ for any divisor $D \subset X$.
\end{enumerate}
\end{Lemma}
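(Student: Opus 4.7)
The plan is to prove both parts by invoking the projection formula recorded as Proposition \ref{prop:intfacts}(3), reducing the triple intersection products on $Y$ to intersections on $X$ against pushforwards of powers of the exceptional divisor $E$. The only non-formal input is the behavior of $\pi_*(E^k)$ for $k=1,2$, which is dictated by the fact that $E$ is a $\mathbb{P}^1$-bundle over $C$.

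For part (1), the key observation is that $\pi|_E \colon E \to C$ realizes $E \cong \mathbb{P}(N_{C/X})$ as a ruled surface over the curve $C$. Hence the image of the $2$-cycle $E$ under $\pi$ is the $1$-dimensional subvariety $C$, so $\pi_* E = 0$ in $A_2(X)$ on dimensional grounds. Applying the projection formula to $\pi^*(D_1 \cdot D_2) \cdot E$, I obtain
\[
\pi^*D_1 \cdot \pi^*D_2 \cdot E \;=\; \pi_*\bigl(\pi^*(D_1\cdot D_2)\cdot E\bigr) \;=\; D_1 \cdot D_2 \cdot \pi_* E \;=\; 0,
\]
using that the top-degree pushforward of a $0$-cycle is just the degree.

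For part (2), the projection formula again gives $\pi^* D \cdot E^2 = D \cdot \pi_*(E^2)$, so it suffices to identify $\pi_*(E^2) = -C$ in $A_1(X)$. To compute this, I would use the self-intersection formula $E^2 = j_*\,c_1(N_{E/Y})$, where $j\colon E \hookrightarrow Y$ is the inclusion; since $E = \mathbb{P}(N_{C/X})$ is the exceptional divisor of a blowup along a smooth codimension-two center, its normal bundle is the tautological line bundle $N_{E/Y} \cong \mathcal{O}_E(-1)$. Writing $\xi = c_1(\mathcal{O}_E(1))$, pushforward along the ruling $\pi|_E \colon E \to C$ then yields $(\pi|_E)_*(-\xi) = -[C]$, because $\xi$ has degree one on each $\mathbb{P}^1$-fiber; chasing this through $j$ gives $\pi_*(E^2) = -C$ as claimed.

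The only nontrivial ingredient is the identification $N_{E/Y} \cong \mathcal{O}_E(-1)$ on the projective bundle $E = \mathbb{P}(N_{C/X})$, which is the standard local description of a smooth blowup and is the main obstacle to an entirely formal argument. Once it is invoked, the remainder is a routine $\mathbb{P}^1$-bundle computation combined with the projection formula already recorded in Proposition \ref{prop:intfacts}; alternatively, one may simply quote the Segre-class identity $\pi_*(E^k) = (-1)^{k-1}\,(j_C)_* s_{k-2}(N_{C/X})$ specialized to $k=1,2$, which gives both $\pi_* E = 0$ and $\pi_* E^2 = -[C]$ in one stroke.
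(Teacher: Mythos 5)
Your proposal is correct and follows essentially the same route as the paper: both parts reduce to the projection formula of Proposition~\ref{prop:intfacts}(3), with part (1) handled by the dimensional vanishing of $\pi_*E$ and part (2) by the self-intersection description of $E^2$ on the exceptional $\mathbb{P}^1$-bundle. The only difference is presentational: the paper simply quotes the formula $E^2 = -\pi^*C + (\deg\mathcal{N}_{C/X})\,f$ from the literature, whereas you rederive the equivalent identity $\pi_*(E^2)=-[C]$ from the identification $N_{E/Y}\cong\mathcal{O}_E(-1)$ and the $\mathbb{P}^1$-bundle pushforward $(\pi|_E)_*\xi=[C]$, so your version is slightly more self-contained but not a genuinely different argument.
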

\begin{proof}
The first relation follows from proposition~\ref{prop:intfacts}. The second follows from the fact~\cite{fanovarieties} that the self-intersection class
\[
E^2 = -\pi^*C + \text{deg } \mathcal{N}_{C/X} \cdot f
\]
where $f$ is the fiber class. The result then follows similarly from propostion~\ref{prop:intfacts}.
\end{proof}
\begin{Lemma}\label{lem:intersectionspoint}
Assume that $X, Y$ are threefolds and that $\pi$ is a blowup of a smooth point $p \in X$. Then we have the relations:
\begin{enumerate}
\item
$\pi^* D_1 \cdot \pi^* D_2 \cdot E = 0$ for any divisors $D_1,D_2 \subset X$.
\item
$\pi^* D \cdot E^2 = 0$ for any divisor $D \subset X$.
\end{enumerate}
\end{Lemma}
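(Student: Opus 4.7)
The plan is to deduce both identities from the projection formula in Proposition~\ref{prop:intfacts}(3), combined with the fact that the exceptional divisor of a blowup of a smooth point on a threefold is contracted to that point.

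For part (1), I would apply the projection formula with $\alpha = D_1 \cdot D_2 \in A^2(X)$ and $\beta = E \in A_2(Y)$. This gives
\[
\pi_*\bigl(\pi^*(D_1 \cdot D_2) \cdot E\bigr) \;=\; (D_1 \cdot D_2) \cdot \pi_* E
\]
in $A_0(X)$. Since $\pi$ contracts $E$ to the single point $p \in X$ and $\dim E = 2 > 0 = \dim p$, the pushforward $\pi_* E$ vanishes in $A_2(X)$. As $\pi^*(D_1\cdot D_2)\cdot E = \pi^*D_1\cdot\pi^*D_2\cdot E$ is a zero-cycle on $Y$ and $\pi_*$ preserves degrees of zero-cycles, the intersection number equals zero, proving (1).

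For part (2), the cleanest approach is to observe that $E^2$, as a class in $A_1(Y)$, is supported on $E$, because intersection of a divisor with itself is computed via $c_1(\mathcal{O}_E(E))\cap [E]$ pushed forward into $Y$. Concretely, $E\cong \mathbb{P}^2$ with $\mathcal{O}_E(E)\cong \mathcal{O}_{\mathbb{P}^2}(-1)$, so $E^2 = -h$ where $h$ is the class of a line in $E$. Since $\pi$ contracts $E$ (and hence $h$) to the point $p$, we have $\pi_* E^2 = -\pi_* h = 0$ in $A_1(X)$. Applying the projection formula again,
\[
\pi^* D \cdot E^2 \;=\; \pi_*\bigl(\pi^* D \cdot E^2\bigr) \;=\; D \cdot \pi_* E^2 \;=\; 0,
\]
which gives (2).

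Neither step is a real obstacle; the argument is essentially a single application of the projection formula in each case, the only ingredient being the elementary fact that $\pi_*$ annihilates any cycle of positive dimension supported on $E$ since $E$ has image equal to the point $p$. The computation $E^2 = -h$ via $\mathcal{O}_E(E)=\mathcal{O}_{\mathbb{P}^2}(-1)$ is standard (and parallels the formula used in Lemma~\ref{lem:intersectionscurve}), but it is not even strictly needed if one merely invokes that $E^2$ is a cycle supported on $E$, hence its pushforward vanishes for dimensional reasons.
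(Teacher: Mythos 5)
Your proof is correct and matches the paper's approach: the paper's proof is the single sentence that both relations follow from Proposition~\ref{prop:intfacts}, i.e.\ the projection formula, which is exactly the mechanism you spell out (pushing forward against $E$ and against $E^2$, both of which vanish because $E$ is contracted to a point). The only difference is that you have filled in the details the paper leaves implicit.
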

\begin{proof}
Both relations follow from proposition~\ref{prop:intfacts}.
\end{proof}

\section{General conditions for calibration}
\begin{repLemma}{lem:existence} 
Let $Q \colon \mathbb{R}^n \rightarrow \mathbb{R}$ be a quadratic form. Assume that there exists $a,b \in \mathbb{R}^n_+$ such that $Q(a) > 0$ and $Q(b) < 0$. Then there exists an open neighborhood $U \subset Int(\mathbb{R}^n_+)$ such that $U \cap Q^{-1}(0)$ is a submanifold of codimension $1$.
\end{repLemma}
\begin{proof}
Indeed, as $\mathbb{R}^n_+$ is path-connected and $Q$ is smooth, there must be a point $p \in Int(\mathbb{R}^n_+)$ such that $Q(p) = 0$ by the intermediate value theorem. It suffices to prove that there exists an open subset $U \subset Int(\mathbb{R}^n_+)$ such that $Q|_U^{-1}(0)$ is non-empty and $Q$ is regular on all such points, since then the smooth submersion theorem guarantees that  $Q|_U^{-1}(0)$ is a smooth submanifold of codimension one.

By assumption, $Q$ must be indefinite, and up to a change of basis, we may assume that 
\[Q(x) = a_1x_1^2 + \ldots + a_kx_k^2 - a_{k+1}x_{k+1}^2 - \ldots - a_nx_n^2
\]
with $a_i \geq 0$. If $p$ is a regular point then the Jacobian $\partial Q / \partial x_j$ evaluated at $p$ does not vanish for all $j$, and in particular it must be the case that $x_i(p) \neq 0 $ for some $a_i >0$. Furthermore, we may take $U_p \subset Int(\mathbb{R}^n_+)$ an open ball centered at $p$ sufficiently small such that $x_i(q) > 0$ for all $q \in U_p$\jhc; i.e., $Q$ is regular on $U_p$.

If $p$ is not regular, then the Jacobian vanishes at $p$ and there exists $a_i >0, a_j <0 $, such that $x_i(p) = x_j(p) = 0$. Take $U_p \subset Int(\mathbb{R}^n_+)$ an open ball centered at $p$. Take $\delta_i, \delta_j$ sufficiently small such that the points 
\begin{align*}
p_1 &= (x_1(p), \ldots, x_{i-1}(p), \delta_i, x_{i+1}(p), \ldots, x_n(p))\\
p_2 &= (x_1(p), \ldots, x_{j-1}(p), \delta_j, x_{j+1}(p), \ldots, x_n(p)) 
\end{align*}
are contained in $U_p$.

In particular, $Q(p_1) = a_i\delta_i^2 >0 $ and $Q(p_2) = a_j \delta_j^2 < 0$ and any point $q$ on a straight line between $p_1,p_2$ satisfies $x_i(q), x_j(q) \neq 0$. The first assertion together with the intermediate value theorem implies there exists a distinct point $p' \in U_p$ such that $Q(p') = 0$ and the second assertion implies that this point is regular. Then by the argument of the second paragraph $Q$ is regular on a sufficiently small neighborhood $U_{p'}$.
\end{proof}

\section{Blowups of toric curves}
Our notation in this section will follow the convention introduced in Section~ \ref{curves} regarding proper transforms of exceptional divisors in face trees and edge trees.

We first record a computation which will be used repeatedly in the below. We denote the proper transforms of face leaves to $X_f$ by $\widetilde{F_i}$, their total transforms to $X_e$ by $\pi^*\widetilde{F_i}$ and all pullbacks of edge leaves to $X_e$ by $\pi^*E_i$.
\begin{corollary}\label{cor:intrelations}
We have the following intersection relations:
\begin{enumerate}
\item
$\pi^*E_i \cdot \pi^* \widetilde{F_j} \cdot \pi^*\widetilde{F_k} = 0$.
\item
$\pi^*E_i\cdot \pi^*E_j \cdot \pi^* \widetilde{F_k} = 0$ for $ i \neq j$.
\item
$(\pi^*E_i)^2 \cdot \pi^*\widetilde{F_j} = -C \cdot \widetilde{F_j}$ on $X_f$.
\end{enumerate}
\end{corollary}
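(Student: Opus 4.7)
The plan is to reduce each of the three identities to the single-blowup statements of Lemma~\ref{lem:intersectionscurve} and then lift them to $X_e$ using the ring homomorphism property of pullback on Chow rings (Proposition~\ref{prop:intfacts}(2)), together with the fact that proper birational pullback preserves degrees of $0$-cycles.

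For (1), let $\tau_i \colon X_{f+i} \to X_{f+i-1}$ denote the single blowup in the edge-tree sequence that creates $E_i$, with center a curve $C_i \subset X_{f+i-1}$, and let $\rho \colon X_{f+i-1} \to X_f$ denote the composition of the earlier edge blowups. Pulling $\widetilde{F_j}, \widetilde{F_k}$ back from $X_f$ to $X_{f+i-1}$ via $\rho^*$ and applying Lemma~\ref{lem:intersectionscurve}(1) to $\tau_i$ with these in the roles of $D_1, D_2$ produces the identity $\tau_i^*\rho^*\widetilde{F_j} \cdot \tau_i^*\rho^*\widetilde{F_k} \cdot E_i = 0$ on $X_{f+i}$, which pulls back to $X_e$ by functoriality. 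For (2), I assume without loss of generality that $i<j$ and let $\tau_j \colon X_{f+j} \to X_{f+j-1}$ be the blowup creating $E_j$. On $X_{f+j-1}$ both the total transform of $E_i$ (from $X_{f+i}$) and the total transform of $\widetilde{F_k}$ (from $X_f$) are honest divisors; playing them as $D_1, D_2$ in Lemma~\ref{lem:intersectionscurve}(1) applied to $\tau_j$ and pulling the resulting identity back to $X_e$ yields (2).

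For (3), I apply Lemma~\ref{lem:intersectionscurve}(2) to $\tau_i$ with $D$ the total transform of $\widetilde{F_j}$ from $X_f$ to $X_{f+i-1}$; this gives $\tau_i^*D \cdot E_i^2 = -D \cdot C_i$ on $X_{f+i}$. Pulling back along $X_e \to X_{f+i}$ and comparing degrees of $0$-cycles (preserved under birational pullback) yields $(\pi^*E_i)^2 \cdot \pi^*\widetilde{F_j} = -\deg(D \cdot C_i)$, and the projection formula along $\rho$ then rewrites this as $-\widetilde{F_j} \cdot \rho_* C_i$ on $X_f$, i.e.\ $-\widetilde{F_j} \cdot C$ with $C$ understood as the image on $X_f$ of the blowup center.

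The main obstacle I anticipate is purely bookkeeping: keeping straight which class naturally lives on which intermediate variety $X_{f+i-1}, X_{f+i}, \ldots, X_e$ and which composition of blowup maps is meant when writing ``$\pi^*$'' in the common notation of Section~\ref{curves}. Once the single-blowup identities of Lemma~\ref{lem:intersectionscurve} are invoked at the correct level of the tower, the rest of each verification is a formal application of Proposition~\ref{prop:intfacts}.
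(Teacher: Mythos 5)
Your treatments of (1) and (2) track the paper's: both are read off from Lemma~\ref{lem:intersectionscurve}(1) applied at the level where the relevant exceptional divisor is created, and then pulled up to the top of the tower, exactly as the paper does.

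For (3), however, you take a genuinely different route from the paper, and your version has a gap. The paper proves (3) by an \emph{induction down the edge tree}: after applying the self-intersection formula for $E_i$, it identifies the center as $E_{i-1} \cap E_b$, expands $\pi^*E_{i-1}\cdot(\pi^*E_b - \cdots - \pi^*E_{i-1})\cdot\pi^*\widetilde F_j$, kills the cross terms using Lemma~\ref{lem:intersectionscurve}(1), and telescopes $\pi^*E_i^2\cdot\pi^*\widetilde F_j = \pi^*E_{i-1}^2\cdot\pi^*\widetilde F_j = \cdots = \pi^*E_{f+1}^2\cdot\pi^*\widetilde F_j$, at which point the center is \emph{literally} the base curve $C$ and Lemma~\ref{lem:intersectionscurve}(2) finishes. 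You instead apply Lemma~\ref{lem:intersectionscurve}(2) once at level $i$ to get $-\rho^*\widetilde F_j\cdot C_i$, then invoke the projection formula to write this as $-\widetilde F_j\cdot\rho_*C_i$, and finally declare this to be $-\widetilde F_j\cdot C$ ``with $C$ understood as the image on $X_f$ of the blowup center.'' That last move is where the gap is: the $C$ in the statement of Corollary~\ref{cor:intrelations}(3) is fixed to be the base curve of the edge tree (as set up in Section~\ref{curves}), not redefined by you, and what you actually need is the nontrivial cycle-level identity $\rho_*[C_i] = [C]$, i.e.\ that the center $C_i$ maps onto $C$ with degree one. This is true — in the toric picture every $2$-cone produced by the Farey-mediant subdivision of $\mathrm{cone}(v_x,v_y)$ is unimodular with respect to the sublattice $\langle v_x,v_y\rangle$, so every $C_i$ projects isomorphically onto $C$ — but it is precisely the content that the paper's telescoping induction was designed to establish without an explicit pushforward computation, and you have merely asserted it. To repair your version, you would need to prove $\rho_*[C_i] = [C]$, either via the unimodularity argument just sketched or by reinstating the paper's inductive reduction to $E_{f+1}$.
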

\begin{proof}
The first two relations follow from Lemma~\ref{lem:intersectionscurve}(1). To see the third relation, fix $E_i$ and consider the truncation of the sequence of blowups $\pi_i \colon X_i \rightarrow \ldots \rightarrow X_f$ establishing an edge tree. Moreover, we may assume that each step $X_{j+1}\rightarrow X_{j}$ is the blowup along one of two toric curves contained in the exceptional divisor $E_{j}$ of the map $X_{j} \rightarrow X_{j-1}$.

By Lemma~\ref{lem:intersectionscurve}(2), we have 
\begin{align*}
&(\pi^*E_i)^2 \cdot \pi^*\widetilde{F_j} = \pi^*E_i^2 \cdot \pi^*\widetilde{F}_j \\
= &-\pi^* (\widetilde{E_a} \cdot \widetilde{E_b} )\cdot \pi^* \widetilde{F}_j +  \text{deg } \mathcal{N}_{E_a \cap E_b/X_{i-1}} f \cdot \pi^*\widetilde{F}_j
\end{align*}
where $\widetilde{E_a}, \widetilde{E_b}$ are the toric divisors such that $X_{i} \rightarrow X_{i-1}$ is the blowup along $E_a \cap E_b$ and $f$ is the fiber of the projective bundle $E_i \rightarrow E_a \cap E_b$. By proposition~\ref{prop:intfacts}(3), we have $f \cdot \pi^*\widetilde{F}_j = \pi_*f \cdot \widetilde{F_j}= 0$. By the assumption of the last sentence in the above paragraph, we may assume that $\widetilde{E_a} = E_{i-1}$ is the exceptional divisor of the blowup $X_{i-1} \rightarrow X_{i-2}$ and $\widetilde{E_b} = \pi^*E_b - \ldots - E_a$ as classes on $X_{i-1}$.

Thus, we have the equalities
\begin{align*}
&-\pi^* (\widetilde{E_a} \cdot \widetilde{E_b} )\cdot \pi^* \widetilde{F}_j \\ =
&-\pi^*E_{i-1} \cdot (\pi^*E_b - \ldots - \pi^*E_{i-1})\cdot \pi^*\widetilde{F}_j  \\
= &\ \pi^* E_{i-1}^2 \cdot \pi^* \widetilde{F_j}\\
        &\qquad\quad\vdotswithin{ = }\notag \\
= &\ \pi^*E_{f+1}^2 \cdot \pi^*\widetilde{F}_j\\
= &-C \cdot \widetilde{F_j}
\end{align*}
where the second equality follows again by Lemma~\ref{lem:intersectionscurve}(1), the second to last equality follows by induction on the edge tree and $C$ is the base curve of the edge tree.
\end{proof}

\begin{corollary}\label{cor:ampconditions}
Let $D_x,D_y \in N^1(X_f)$ denote the unique toric divisors adjacent to the base curve of the edge tree, $\widetilde{D_x},\widetilde{D_y}$ their proper transforms on $X_e$. Let
\[J_{rel} = -a_{f+1} \pi^*E_{f+1} - \ldots -a_e E_{e} \in N^1(X_e/X_f)
\]
 with $a_l$ arbitrary coefficients. Let $\widetilde{E_i}$ denote the corresponding proper transforms on $X_e$. Then:
\begin{enumerate}
\item
$\widetilde{E_i} = \pi^*E_{i} - \pi^*E_{i_1} - \ldots - \pi^*E_{i_k}$ for all $i$ for some $k$.
\item
$\widetilde{D_x} \cdot \widetilde{E_i} \cdot J_{rel} = a_i - a_{i_1} - \ldots - a_{i_k}$.
\end{enumerate}
\end{corollary}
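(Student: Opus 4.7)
My plan is to prove Part~1 by induction on the length of the edge-tree blowup sequence and Part~2 by substituting Part~1 into the triple intersection, expanding bilinearly, and reducing to a pair of intersection identities that closely mirror Corollary~\ref{cor:intrelations}.

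For Part~1, at each step $X_{j+1}\to X_j$ of the edge tree the strict transform of $E_i$ either pulls back unchanged (if the new blowup center is disjoint from it) or pulls back minus $E_{j+1}$ (if the center lies in it). Unfolding the induction and pulling back to $X_e$ leaves exactly
\[
\widetilde{E_i}=\pi^*E_i-\pi^*E_{i_1}-\ldots-\pi^*E_{i_k},
\]
where the $E_{i_j}$ enumerate precisely those subsequent blowups whose centers lie in the successive strict transforms of $E_i$.

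For Part~2, substituting Part~1 and distributing gives
\[
\widetilde{D_x}\cdot\widetilde{E_i}\cdot J_{rel} = -\sum_l a_l\,\widetilde{D_x}\cdot\pi^*E_i\cdot\pi^*E_l + \sum_{j,l} a_l\,\widetilde{D_x}\cdot\pi^*E_{i_j}\cdot\pi^*E_l.
\]
The entire computation now hinges on the triple intersections $\widetilde{D_x}\cdot\pi^*E_m\cdot\pi^*E_l$, and I would establish two parallel identities: (i)~$\widetilde{D_x}\cdot\pi^*E_m\cdot\pi^*E_l=0$ for $m\neq l$, proved by repeated application of Lemma~\ref{lem:intersectionscurve}(1) at each edge-tree blowup, exactly as in the proof of Corollary~\ref{cor:intrelations}(2); and (ii)~$\widetilde{D_x}\cdot(\pi^*E_l)^2=-1$, obtained from Lemma~\ref{lem:intersectionscurve}(2) together with an inductive descent through the edge tree that collapses the computation to $-D_x\cdot C$, which equals $-1$ because $D_x$ meets the adjacent toric base curve $C$ with multiplicity one by the transversality of adjacent toric strata. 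Substituting these two identities back, all off-diagonal terms vanish and what remains is $-a_i(-1) + \sum_j a_{i_j}(-1) = a_i - a_{i_1} - \ldots - a_{i_k}$, as required.

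The main obstacle will be handling identity~(ii) cleanly. Because $\widetilde{D_x}$ is itself a proper transform of the form $\pi^*D_x-\sum \pi^*E_{m_s}$, I need to confirm that the correction terms $\pi^*E_{m_s}\cdot(\pi^*E_l)^2$ each contribute zero (which in turn follows from identity~(i) plus one further application of Lemma~\ref{lem:intersectionscurve}), so that only the single $-D_x\cdot C=-1$ piece survives. Once that is pinned down, the remaining expansions are notationally heavy but follow exactly the pattern already set out in the proof of Corollary~\ref{cor:intrelations}.
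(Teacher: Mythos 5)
Your overall plan matches the paper's proof closely: Part~1 is indeed the standard bookkeeping of total versus strict transforms, and Part~2 proceeds exactly by substituting Part~1 into the triple product and reducing to the pair of intersection identities you label (i) and (ii). Identity~(i) follows from repeated use of Lemma~\ref{lem:intersectionscurve}(1), and identity~(ii) collapses to $-D_x\cdot C=-1$ as you describe. That all agrees with the paper.

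However, the way you handle the ``correction terms'' in the last paragraph is wrong, and the resolution you propose would not go through. You claim $\pi^*E_{m_s}\cdot(\pi^*E_l)^2=0$ ``from identity~(i) plus one further application of Lemma~\ref{lem:intersectionscurve}.'' This does not follow from identity~(i) (which concerns $\widetilde{D_x}\cdot\pi^*E_m\cdot\pi^*E_l$ with $m\neq l$, not a square), and in fact the square term is generically nonzero: by Lemma~\ref{lem:intersectionscurve}(2), $\pi^*E_m\cdot(\pi^*E_l)^2=-E_m\cdot C_l$, where $C_l$ is the curve being contracted at that step, and exceptional edge-tree divisors certainly do meet subsequent blowup centers. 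If $\widetilde{D_x}$ really did carry $\pi^*E_{m_s}$ corrections, your argument for identity~(ii) would break.

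The actual fix is simpler, and your own identity~(ii) already contains the hint. The divisors $D_x,D_y$ are the two toric divisors that meet the base curve $C$ transversally in a single point --- that is precisely why $D_x\cdot C=1$. In particular $D_x$ does not contain $C$, nor does it contain any of the subsequent edge-tree blowup centers (those are toric curves lying in the exceptional locus of the chain, which meet the pullback of $D_x$ at most in points, never along a curve contained in $D_x$). Hence $\widetilde{D_x}=\pi^*D_x$ with no corrections at all; this is exactly the fact the paper invokes. Once you use this, your expansion reduces immediately to identities~(i) and~(ii) with $\widetilde{D_x}$ replaced by $\pi^*D_x$ throughout, and the conclusion follows. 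So the main issue is a misreading of ``adjacent'' as ``containing'' in one sentence while simultaneously using ``adjacent'' as ``meets transversally'' in another; once that inconsistency is resolved, the worry about corrections disappears.
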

\begin{proof}
Part $1$ follows from the standard identification of the strict transform with the total transform.

From the fact that $\widetilde{D_x} = \pi^*D_x$ and part 1, we have the following:
\begin{align*}
\widetilde{D_x} \cdot \widetilde{E_i} \cdot J_{rel} = &\,\pi^*D_x \cdot (\pi^*E_{i} - \pi^*E_{i_1} - \ldots - \pi^*E_{i_k}) \\ &\cdot (-a_{f+1} \pi^*E_{f+1} - \ldots -a_e E_{e}) \\
\end{align*}
By corollary~\ref{cor:intrelations}(2), all terms in the above expression are trivial except for the terms $\pi^*D_x \cdot \pi^*E_j^2$. We then have the equalities
\[
\pi^*D_x \cdot \pi^*E_j^2 = -C \cdot D_x = -1
\]
where the first equality follows from corollary~\ref{cor:intrelations}(3), $C$ is the base curve of the edge tree, and the second equality follows from the assumption that $D_x$ is adjacent to the base curve of the edge tree. The conclusion then follows.
\end{proof}
\begin{repLemma}{lem:redcurves}
The proper transforms $\widetilde{F}_j$ and $\widetilde{E}_k$ can always be strongly calibrated on $X_e$.
\end{repLemma}
\begin{proof}
Consider the truncation to the chain of blowups $\pi: X_k \rightarrow \ldots \rightarrow X_f \rightarrow \ldots \rightarrow X_{j-1}$.

It suffices to prove by Lemma~\ref{lem:calibexc} that there exists $J \in Nef(X_k)$ such that $J^2 \cdot (E_k - \widetilde{F_j}) >0$ where the intersections are defined on the variety $X_k$. Indeed, any relative ample class must be of the form
\[
J_{rel} = - a_{j} \pi^* F_{j}  - \ldots - a_{f} \pi^*F_f - \ldots - a_k E_k
\]
for arbitrary coefficients $a_l$. Moreover, we may assume $a_l >0$ for $ l \geq f+1$ by applying corollary~\ref{cor:ampconditions}(2). Then by Lemma~\ref{lem:relativeJ}, there exists $J \in Nef(X_k)$ of the form:
\[J = (\pi^*H - a_{j} \pi^* F_{j}  - \ldots - a_{f} \pi^*F_f - \ldots - a_k E_k)
\] 
where $a_k > 0$ and $H$ is ample on $X_{j-1}$.

Let $C$ denote the toric curve on $X$ corresponding to the base of the edge tree. We claim that there exists $J_0 \in Nef(X_f)$ such that:
\begin{enumerate}[(a)]
\item
$J_0 \cdot \widetilde{C} > 0 $ where $\widetilde{C}$ is the proper transform of $C$ under the morphism $f \colon X_f \rightarrow X$.
\item
$J_0 \cdot V = 0$ for any curve $V$ contained in the exceptional locus $Exc(f)$
\end{enumerate}
Indeed, let $G $ be any ample class on $X$. Then clearly, $f^*G \cdot V = G \cdot f_* V = 0$ for $V \subset Exc(f)$ by proposition~\ref{prop:intfacts}(3). By assumption $C = D_1 \cap D_2$ is the intersection of toric divisors on $X$ and in particular, the proper transform is given by the class $\widetilde{C} = \widetilde{D_1} \cdot \widetilde{D_2}$ where
\begin{align*}
\widetilde{D_1}&= f^*D_1 -f^* F_{a_1} - \ldots - f^* F_{a_m}\\
\widetilde{D_2} &= f^* D_2 - f^*F_{b_1} - \ldots - f^* F_{b_n}
\end{align*}
By assumption that $f$ is a morphism establishing a face tree, we may assume that any exceptional divisor associated to a curve blowup at an intermediate stage of $f$ (i.e., a curve blow-up inside the face) appears in at most one of $\widetilde{D_1}$ and $\widetilde{D_2}$. Thus, we have
\begin{align*}
f^* G \cdot \widetilde{D_1}\cdot \widetilde{D_2} &= f^*G \cdot (f^*D_1 - \ldots - f^*F_{a_m}) \\
&\cdot (f^*D_2 - \ldots - f^*F_{b_n})\\
&= f^* G \cdot f^*D_1 \cdot f^*D_2 \\
&= G \cdot D_1 \cdot D_2 > 0
\end{align*}
where the second equality follows from the fact that $f^*G \cdot f^*F_{a_k} \cdot f^*F_{a_l} = 0 $ as $f^*F_{a_k} \cdot f^*F_{a_l} \subset Exc(f)$,
and so the triple intersection is trivial in the Chow ring by \ref{prop:intfacts}(3), since $f$ contracts $F_{a_k}$ and $F_{a_l}$ to a point. This exstablishes the claim for any ample class $G$.

Consider the case where $f^*G = J_0$. As $J_0$ is nef and the nef cone is closed under sums, we have that
\[J_n \coloneqq n\pi^*J_0 + \pi^*H - a_{j} \pi^* F_{j}  - \ldots - a_k E_k
\]
is nef for any $n > 1$. By Lemma~\ref{lem:intersectionscurve}(1), we have that $\pi^*J_0 \cdot \pi^*J_0 \cdot E_k$, $\pi^*J_0 \cdot \pi^* H \cdot E_k$, $\pi^*J_0 \cdot \pi^* F_a \cdot E_k$, $\pi^*J_0 \cdot \pi^* E_b \cdot E_k$, are all $0$ unless $k = b$ in the last case, where we have 
\[\pi^* J_0 \cdot E_k^2 = \pi^* J_0 \cdot (-\pi^* C) = - J_0\cdot C\]
by Corollary~\ref{cor:intrelations} and hence 
\begin{align*}
&J_n^2 \cdot E_k\\ &= (n\pi^* J_0)(n\pi^*J_0 - 2(\pi^*H - a_{j} \pi^* F_{j}  +\ldots + a_k E_k))\cdot E_k\\ &+ (\pi^*H - a_{j} \pi^* F_{j}  -\ldots - a_k E_k)^2 \cdot E_k \\
&= 2na_kJ_0\cdot C + (\pi^*H - a_{j} \pi^* F_{j}  -\ldots - a_k E_k)^2 \cdot E_k
\end{align*}

Now, we may assume $\widetilde{F_j} = \pi^* F_j - \pi^* F_a -  \ldots - \pi^* F_b$ for some $j < a < b < i$. Then again by Lemma~\ref{lem:intersectionscurve}(1) and claim (b) in the above, we have that 
\[
J_n^2 \cdot \widetilde{F_j} = (\pi^*H -a_j \pi^* F_0 - \ldots - a_kE_k)^2 \cdot \widetilde{F_j}.
\]
Taking everything together, we have
\begin{align*}
J_n^2\cdot (E_k - \widetilde{F_j}) &= 2 n a_k J_0\cdot C\\
&+ (\pi^*H -a_j \pi^* F_j - \ldots - a_iE_i)^2 \cdot (E_k - \widetilde{F_j})
\end{align*}
Indeed, as the second term is independent of $n$, we find that $J_n^2 \cdot (E_k - \widetilde{F_j})$ is positive for sufficiently large $n$ since $J_0$ is ample and $a_k > 0$.

\end{proof}
\begin{repLemma}{lem:edgecurves}
Assume that a pair of proper transforms $\widetilde{E_j}, \widetilde{E_k}$ intersect. Then the pair admits a calibration.
\end{repLemma}
\begin{proof}
We follow the proof of Lemma~\ref{lem:redcurves}. We consider the truncation of the chain of blowups $\pi \colon X_k \rightarrow \ldots X_j \rightarrow X_{j-1}$ which we may assume is a sequence of only edge blowups on a single curve. 

As in the above, it suffices to prove that there exists $J \in Nef(X_k)$ such that $J^2 \cdot (E_k - \widetilde{E_j}) > 0$. In this setting, we may assume that the proper transform of $E_j$ takes the form
\[
\widetilde{E_j} = \pi^* E_j - \pi^* E_a -  \ldots - \pi^* E_b
\]
for some $j < a \leq b \leq k$. Arguing similarly as in the proof of Lemma~\ref{lem:redcurves}, we consider the nef class
\[J_n \coloneqq  n\pi^*H - a_{j} \pi^* E_{j}  - \ldots - a_k E_k
\]
where $H$ is an ample divisor on $X_{j-1}$. This yields
\begin{align*}
J_n^2 \cdot (E_k - \widetilde{E_j}) &= 2n(a_k - (a_j - a_a - \ldots - a_b)) H \cdot C\\
&+ (- a_j \pi^* E_j - \ldots - a_kE_k)^2 \cdot (E_k - \widetilde{E_j})
\end{align*}
It suffices to prove that there exists coefficients $a_i$ satisfying $a_k - (a_j - a_a - \ldots - a_b) > 0$ such that the sum $D = -a_j \pi^*E_j - \ldots - a_kE_k \in Nef(X_k/X_{j-1})$ is relatively nef. We then conclude by Lemma~\ref{lem:relativeJ}. 

By the toric Mori theorem, the cone $NE(X_k/X_{j-1})$ is generated by the toric curves $\pi^*D_x \cdot \widetilde{E_i}$ and $\pi^*D_y \cdot \widetilde{E_i}$ for $j \leq i \leq k$ where $D_x,D_y$ are the unique toric divisors on $X_f$ adjacent to the base curve of the edge tree. In particular, the divisor $D \in Nef(X_k/X_{j-1})$ if and only if $D \cdot \pi^*D_x \cdot \widetilde{E_i},D \cdot \pi^*D_y \cdot \widetilde{E_i} \geq 0$ for all $j \leq i \leq k$ on $X_k$.

By corollary~\ref{cor:ampconditions}(1), each proper transform $\widetilde{E_i}$ takes the form
\[
\widetilde{E_i}= \pi^*E_i - \pi^* E_a - \ldots - \pi^* E_b
\]
for $i < a \leq b \leq k$. Applying corollary~\ref{cor:ampconditions}(2), the above inequalities reduces to the following system:
\begin{align*}
a_k &\geq 0 \\
        & \vdotswithin{ = }\notag \\
a_j - a_a - \ldots - a_b &\geq 0 
\end{align*}
Choosing $a_k >0$ and setting all other inequalities to equalities, this yields an upper triangular matrix with nonzero entries on the diagonal. Thus, it is clear that the desired condition can be satisfied.
\end{proof}
\begin{repLemma}{lem:nonexistence}
Assume $X_n \rightarrow \ldots \rightarrow X_f$ a sequence of blowups establishing a collection of edge trees. Assume that $J^2 \cdot (D_1 - D_2) > 0$ for any ample class $J \in Nef(X_f)$ with $D_1,D_2$ toric divisors on the same face tree; that is, $D_1$ and $D_2$ cannot be calibrated. Assume in addition that $D_1$ does not intersect with any base curve (in the original variety $X$) of an edge tree. Then any ample class $J' \in Nef(X_n)$ satisfies $J'^2 \cdot (\widetilde{D_1} - \widetilde{D_2}) > 0$ on $X_n$; that is, the proper transforms $\widetilde{D_1}$ and $\widetilde{D_2}$ cannot be calibrated. In other words, the non-existence of a calibration persists under such blowups.
\end{repLemma}
\begin{proof}
We may reduce to the case with the sequence
\[
\pi \colon X_n \rightarrow \ldots \rightarrow X_f \rightarrow \ldots \rightarrow X
\]
where $X_n \rightarrow \ldots \rightarrow X_f$ is a collection of blowups establishing edge trees on the three adjacent toric curves of the face tree on $X_f$.

Let $J' \in Nef(X_n)$ be any ample class. Then $J' = \pi^*D + \pi^*F + \pi^*E$ where $D \in N^1(X)$, $F$ a sum of exceptional divisors contained in the face tree, and $E$ a sum of exceptional divisors contained in the edge trees. Let $C \subset X_m$ be any toric curve contained in the face tree. Then $\pi^*C$ is effective and we must have
\[
J' \cdot \pi^* C = (\pi^*D + \pi^*F + \pi^*E) \cdot \pi^*C > 0
\]
By proposition~\ref{prop:intfacts}(3), we have $\pi^*D \cdot \pi^*C = D \cdot p = 0$ as $\pi^*C \in Exc(\pi)$ and $p$ is a point. By Lemma~\ref{lem:intersectionscurve}(1), $\pi^*E \cdot \pi^*C = 0$. So, it must be the case that $F$ satisfies $F \cdot C = \pi^*F \cdot \pi^* C  > 0$. 

In particular, we have $F\in Amp(X_f/X)$ and by Lemma~\ref{lem:relativeJ}, $\pi^*A + F$ is nef on $X_f$ for some $A \in X$ ample. By assumption 
\[(\pi^*A +F)^2 \cdot (D_1 - D_2) = F^2 \cdot (D_1 - D_2) > 0
\] where $\pi^*A \cdot \pi^*A \cdot D_i = \pi^*A \cdot F \cdot D_i = 0$ by Proposition~\ref{prop:intfacts}(3) and the fact that $D_i, F\cdot D_i \subset Exc(f)$.

But for any divisors $D_1, D_2$ contained in the face tree, we have
\begin{align*}
J'^2 \cdot (\widetilde{D_1} - \widetilde{D_2}) &= (\pi^*D + \pi^* F + \pi^*E)^2 \cdot (\pi^*D_1 - \pi^*D_2)\\
&= F^2 \cdot (D_1 - D_2) + E^2 \cdot (\pi^*D_1 - \pi^*D_2)
\end{align*}
where all the other terms vanish by applying proposition~\ref{prop:intfacts}(c). By corollary~\ref{cor:intrelations}(2), the second term takes the form
\begin{align*}
&\,E^2 \cdot (\pi^*D_1 - \pi^*D_2)\\ &= (a_f \pi^*E_f + \ldots + a_n \pi^*E_n)^2 \cdot (\pi^*D_1 - \pi^*D_2) \\
&= (a_f^2 \pi^*E_f^2 + \ldots + a_n^2 \pi^*E_n^2) \cdot (\pi^*D_1 - \pi^*D_2)
\end{align*}
By assumption $D_1$ does not intersect the base curve of any edge tree, so by corollary~\ref{cor:intrelations}(3), we have 
\[(a_f^2 \pi^*E_f^2 + \ldots + a_n^2 \pi^*E_n^2)\cdot \pi^*D_1 = 0
\]
On the other hand, again by corollary~\ref{cor:intrelations}(3), 
\[-a_j^2 \pi^*E_f^2 \cdot \pi^*D_2 = a_j^2 C \cdot D_2 \geq 0
\]
on $X_f$ which is nonzero if and only if $D_2$ is adjacent to a base curve of one of the edge trees.
\end{proof}

\bibliography{refs}
\end{document}